\documentclass[smallcondensed]{svjour3} 
\usepackage{lineno,hyperref}
\usepackage{fix-cm}
\usepackage{amsmath}
\usepackage{graphicx}
\usepackage{latexsym,enumerate,amssymb,amsbsy}
\usepackage{graphics,color}
\usepackage{verbatim}
\usepackage{url}
\usepackage{dblfloatfix}
\hyphenation{op-tical net-works semi-conduc-tor}
\usepackage{mathptmx}
\usepackage{tikz}
\usepackage{xfrac}
\usepackage{amsfonts}
\usepackage{algorithm}
\usepackage{algpseudocode}
\usepackage{url}
\usepackage{colortbl}

\newcommand{\Mod}[1]{\ (\textup{mod}\ #1)}
\newcommand{\etal}{\emph{et al. }}
\newcommand{\eg}{\emph{e.g.}}
\newcommand{\ie}{\emph{i.e.}}

\def\F{{\mathbb{F}}}
\def\N{{\mathbb{N}}}
\def\Z{{\mathbb{Z}}}
\def\bv{{\mathbf{v}}}
\def\bw{{\mathbf{w}}}
\def\bS{{\mathbf{S}}}
\def\u{{\mathbf{u}}}
\def\s{{\mathbf{s}}}
\def\m{{\mathbf{m}}}
\def\0{{\mathbf{0}}}
\def\1{{\mathbf{1}}}
\def\a{{\mathbf{a}}}
\def\b{{\mathbf{b}}}

\def\x{{\mathbf{x}}}
\def\y{{\mathbf{y}}}
\def\O{{\mathcal{O}}}

\def\M{{\mathcal{M}}}
\def\cC{{\mathcal{C}}}

\def\P{{\mathcal{P}}}
\DeclareMathOperator{\lcm}{lcm}

\def\setcom{{\mathsf{c}}}

%%%
\journalname{}%Designs, Codes and Cryptography}

\begin{document}
\title{On Binary de Bruijn Sequences from LFSRs with Arbitrary Characteristic Polynomials}
\author{Zuling Chang \and Martianus Frederic Ezerman \and\\San Ling \and Huaxiong Wang}
\authorrunning{Chang \and Ezerman \and Ling \and Wang}
\institute{Z. Chang \at School of Mathematics and Statistics, Zhengzhou University, Zhengzhou 450001, China\\
\email{zuling\textunderscore chang@zzu.edu.cn}
\and
M. F. Ezerman \and S. Ling \and H. Wang \at Division of Mathematical Sciences, School of Physical and Mathematical Sciences,\\
Nanyang Technological University, 21 Nanyang Link, Singapore 637371\\
\email{\{fredezerman,lingsan,HXWang\}@ntu.edu.sg}
}
\date{Received: date / Accepted: date}
\maketitle

\begin{abstract}
We propose a construction of de Bruijn sequences by the cycle joining method from linear feedback shift registers (LFSRs) with arbitrary characteristic polynomial $f(x)$. We study in detail the cycle structure of the set $\Omega(f(x))$ that contains all sequences produced by a specific LFSR on distinct inputs and provide a fast way to find a state of each cycle. This leads to an efficient algorithm to find all conjugate pairs between any two cycles, yielding the adjacency graph. 
The approach is practical to generate a large class of de Bruijn sequences up to order $n \approx 20$. Many previously proposed constructions of de Bruijn sequences are shown to be special cases of our construction.
\keywords{Binary periodic sequence \and LFSR \and de Bruijn sequence \and cycle structure \and adjacency graph \and cyclotomic number}
% \PACS{PACS code1 \and PACS code2 \and more}
\subclass{11B50 \and 94A55 \and 94A60}
\end{abstract}

\section{Introduction}\label{sec:intro}
A binary {\it de Bruijn sequence} of order $n$ has period $N=2^n$
in which each $n$-tuple occurs exactly once in each period. There are $2^{2^{n-1}-n}$ of them~\cite{Bruijn46}. Some of their earliest applications are in communication systems. They are generated in a deterministic way, yet satisfy the randomness criteria in~\cite[Ch.~5]{GG05} and are balanced, containing the same number of $1$s and $0$s. 
In cryptography, they have been used as a source of pseudo-random numbers and in key-sequence generators of stream ciphers~\cite[Sect. 6.3]{Menezes:1996}. In computational molecular biology, one of the three assembly paradigms in DNA sequencing is the de Bruijn graph assemblers model~\cite[Box 2]{NP13}. Some roles of de Bruijn sequences in robust positioning patterns are discussed by Bruckstein \etal in~\cite{Bruck12}. They have numerous applications, \eg, in robotics, smart pens, and camera localization. There has also been an increased interest in deploying de Bruijn sequences in spread spectrum~\cite{SG13}. 

On their construction, two yardsticks are often used to measure the goodness of a method or algorithm, namely, the number of constructed sequences and the efficiency of the construction in terms of both time and memory requirements. In some applications, it is crucial to have a lot of sequences to choose from. Methods to generate all binary de Bruijn sequences have been available in the literature (see, \eg, \cite{Fred82} and \cite{Ral82}). These methods, however, require a large memory space and or long running time.

Fredricksen's survey~\cite{Fred82} recalls various properties and construction methods up to the early $1980$s. A well-known construction called the {\it cycle joining (CJ) method} begins with a given Feedback Shift Register (FSR) and joins all cycles produced by the FSR into a single cycle by identifying the conjugate pairs shared by any pair of cycles. The cycle structure of a Linear FSR (LFSR) can be studied using tools from the algebra of polynomial rings. It is then natural to construct de Bruijn sequences by applying the cycle joining method to LFSRs. Some LFSRs with simple cycle structure, such as the maximal length LFSRs, pure cycling registers, and pure summing registers, have been studied in~\cite{EL84,Fred75,Fred82}. Hauge and Helleseth established a connection between the cycles generated by LFSRs and irreducible cyclic codes in~\cite{HH96}. The number of de Bruijn sequences obtained from these LFSRs is related to cyclotomic numbers, which in general are hard to determine precisely.

Recently, C.~Li \etal studied some classes of de Bruijn sequences. In~\cite{Li14-1} and~\cite{Li14-2}, respectively, the characteristic polynomials of the LFSRs are $(1+x)^3 p(x)$ and $(1+x^3) p(x)$, where $p(x)$ is a primitive polynomial of degree $n>2$. Further generalized results are given in~\cite{Li16} to include products of primitive polynomials whose degrees are pairwise coprime, leading to coprime periods of the sequences that form the cycle structure. This generalization yields a relatively small number of de Bruijn sequences when compared to the one we are proposing. M.~Li and D.~Lin discussed the cycle structure of LFSRs with characteristic polynomial $f(x)=\prod_{k=1}^s \ell_k(x)$ where $\gcd(\ell_i(x),\ell_j(x))=1$ for $1 \leq i \neq j \leq s$ and presented some results about the adjacency graph of $\Omega(f(x))$ in more recent work~\cite{LL17}. Each factor $\ell_k(x)$ of $f(x)$ is not necessarily irreducible.

We put forward a construction from LFSRs whose characteristic polynomials are products of two distinct irreducible polynomials and showed that it generates a large number of de Bruijn sequences in~\cite{Chang2017}. In another work~\cite{CELW16}, whose preliminary results were presented at SETA 2016, we discussed in detail how to determine the cycle structure and find a state for each cycle for an arbitrary
polynomial $f(x) \in \F_{q}[x]$ for any prime power $q$. Drawing insights from them, this present work generalizes the construction of de Bruijn sequences from LFSRs with arbitrary polynomials as their characteristic polynomials. The main contributions are as follows.
\begin{enumerate}
\item We propose a construction of de Bruijn sequences by the cycle joining method from LFSRs with an arbitrary characteristic polynomial in $\F_{2}[x]$. The cycle structure and adjacency graph are studied in details. This gives us a fast method to find all conjugate pairs shared by any two cycles. Our construction covers numerous previously-studied constructions of de Bruijn sequences as special cases. While our approach works for the general case of any arbitrary characteristic polynomial, a careful investigation on LFSRs with repeated roots shows that it is not advisable to construct de Bruijn sequences from them. Unless one needs a lot more de Bruijn sequences than those that can be built from LFSRs whose characteristic polynomials are products of distinct irreducible polynomials. In such a situation one must have access to computing resources beyond what is practical for most users.
	
\item In generalizing the number of irreducible polynomial factors of $f(x)$ from $2$ in~\cite{Chang2017} to $s \geq 2$ we introduce some modifications to stay efficient. Most notably, for an irreducible $g(x)$, a state in $\Omega(g(x))$ is computed based on the mapping $\varphi$ in~\cite[Sect. 3]{Chang2017}, requiring computations over $\F_{2^n}$. This present work determines the state by a simple decimation.
	
\item Based on our theoretical results, a {\tt python} implementation is written to generate de Bruijn sequences. Here we aim for completeness instead of speed. \underline{All} conjugate pairs are found and the \emph{full} adjacency graph $G$ is built. Our method is practical for up to $n \approx 20$. Further optimization tricks are possible, depending on a user's particular requirements and available resources.
\end{enumerate}

After this introduction come preliminary notions and known results in Section \ref{sec:prelims}. Section \ref{sec:structure} discusses the cycle structure and how to find a state belonging to a given cycle. Section \ref{sec:graph} establishes important properties of the conjugate pairs. Two preparatory algorithms are explained in Section \ref{sec:PrepAlg}. These are then used to design the main algorithm that finds all conjugate pairs between any pair of cycles in Section \ref{sec:MainAlg}. Section \ref{sec:reproots} discusses the complication of having a characteristic polynomial with repeated roots and highlights some parts of our method that can still be useful in this situation. Section \ref{sec:generate} shows how the tools fit together nicely, using the examples presented in the preceding sections to derive a large number of de Bruijn sequences of order $7$. A summary of our implementation is given in Section \ref{sec:implem}. The last section contains a brief conclusion, a table listing prior constructions which can be seen as special cases of ours, and possible future directions.

\section{Preliminaries}\label{sec:prelims}

For convenience, we recall needed definitions and results, mostly from~\cite[Ch.~4]{GG05}.

An {\it $n$-stage shift register} is a circuit consisting of $n$ consecutive storage units, each containing a bit, regulated by a clock. When it pulses, the bit in each storage unit is shifted to the next stage in line. A shift register becomes a binary code generator when one adds a feedback loop which outputs a new bit $s_n$ based on the $n$ bits $\s_0= (s_0,\ldots,s_{n-1})$ called an {\it initial state} of the register. The corresponding {\it feedback function} $h(x_0,\ldots,x_{n-1})$ is the Boolean function that outputs $s_n$ on input $\s_0$. A feedback shift register (FSR) outputs a binary sequence $\s=s_0,s_1,\ldots,s_n,\ldots$ satisfying $s_{n+\ell} = h(s_{\ell},s_{\ell+1},\ldots,s_{\ell+n-1})$ for 
$\ell = 0,1,2,\ldots$. For $N \in \N$, if $s_{i+N}=s_i$ for all $i \geq 0$, then $\s$ is {\it $N$-periodic}
or {\it with period $N$} and one writes $\s= (s_0,s_1,s_2,\ldots,s_{N-1})$.  The period of the all zero sequence $\0$ is $1$. When the context is clear, $\0$ also denotes a string of zeroes or a zero vector. We call $\s_i= (s_i,s_{i+1},\ldots,s_{i+n-1})$ {\it the $i$-th state} of $\s$ and states $\s_{i-1}$ and $\s_{i+1}$ the {\it predecessor} and {\it successor} of $\s_i$, respectively. 
%
%Let $\wedge$ and $\vee$ denote, respectively, the logical AND and OR.
Given $\a=(a_0,a_1,\ldots,a_{N-1})$ and $\b=(b_0,b_1,\ldots,b_{N-1}) \in \F_{2}^{N}$, let
$c\a :=(ca_0,ca_1,\ldots,ca_{N-1})$ for $c \in \F_2$ and $\a+\b :=(a_0+b_0,a_1+b_1,\ldots,a_{N-1}+b_{N-1})$.

In an FSR, distinct initial states generate distinct sequences, forming the set $\Omega(h)$ of cardinality $2^n$.
All sequences in $\Omega(h)$ are periodic if and only if the feedback function $h$ is {\it nonsingular}, \ie, $h$ can be written as $h(x_0,x_1,\ldots,x_{n-1})=x_0+g(x_1,\ldots,x_{n-1})$, where $g(x_1,\ldots,x_{n-1})$ is some Boolean function with domain $\F_2^{n-1}$~\cite[p.~116]{Golomb81}. In this paper, the feedback functions are all nonsingular. An FSR is called {\it linear} or an LFSR if its feedback function is linear, and {\it nonlinear} or an NLFSR otherwise.

The {\it characteristic polynomial} of an $n$-stage LFSR with feedback $h(x_0,\ldots,x_{n-1})= \sum_{i=0}^{n-1} c_i x_i$ is $f(x)=x^n+\sum_{i=0}^{n-1}c_ix^i \in \F_2[x]$. A sequence $\s$ may have many characteristic polynomials. The one with the lowest degree is the {\it minimal polynomial} of $\s$. It represents the LFSR of shortest length that generates $\s$. Given an LFSR with characteristic polynomial $f(x)$, the set $\Omega(h)$ is also denoted by $\Omega(f(x))$. A sequence $\bv$ is said to be a {\it $d$-decimation} sequence of $\s$, denoted by $\bv=\s^{(d)}$, if $v_j=s_{d \cdot j}$ for all $j \geq 0$. For a sequence $\s$, the {\it (left) shift operator} $L$ is given by
$L\s=L(s_0,s_1,\ldots,s_{N-1})=(s_1,s_2,\ldots,s_{N-1},s_0) $ with the convention that $L^0\s=\s$. The set
$[\s]:=\{\s,L\s,L^2\s,\ldots,L^{N-1}\s \}$ is a {\it shift equivalent class} or a {\it cycle}. The set $\Omega(f(x))$ can be partitioned into cycles. If $\Omega(f(x))$ consists of exactly $r$ cycles $[\s_1],[\s_2],\ldots, [\s_r]$ for some $r \in \N$, then its {\it cycle structure} is $\Omega(f(x))=[\s_1] \cup [\s_2] \cup \ldots \cup [\s_r]$. A {\it conjugate pair} consists of a state $\bv=(v_0,v_{1},\ldots,v_{n-1})$ and its {\it conjugate} $\widehat{\bv}=(v_0 + 1,v_{1},\ldots,v_{n-1})$. Cycles $C_1$ and $C_2$ are {\it adjacent} if they are disjoint and there exists $\bv$ in $C_1$ whose conjugate $\widehat{\bv}$ is in $C_2$. Adjacent cycles with the same feedback $h(x_0,x_1,\ldots,x_{n-1})$ can be joined into a single cycle by interchanging the successors of $\bv$ and $\widehat{\bv}$. The resulting cycle has feedback function
\begin{equation}\label{eq:feedback}
g(x_0,\ldots,x_{n-1})=h(x_0,\ldots,x_{n-1})+\prod_{i=1}^{n-1}(x_i+v_i+1).
\end{equation}
The feedback functions of the resulting de Bruijn sequences are completely determined once the corresponding conjugate pairs are found. It is therefore crucial to find all pairs and to place them in the adjacency graph.

\begin{definition}\cite{HM96}\label{def:adjgraph}
The {\it adjacency graph} $G$ for an FSR with feedback function $h$ is an undirected multigraph whose vertices correspond to the cycles in $\Omega(h)$. There exists an edge between two vertices if and only if they are adjacent. A conjugate pair labels every edge. The number of edges between any pair of cycles is the number of shared conjugate pairs.
\end{definition}

By definition $G$ contains no loops. There is a one-to-one correspondence between the spanning trees of $G$ and the de Bruijn sequences constructed by the
cycle joining method~\cite{HH96,HM96}. The following result, a variant
of the BEST (de {\bf B}ruijn, {\bf E}hrenfest, {\bf S}mith, and {\bf T}utte) Theorem adapted from~\cite[Sect.~7]{AEB87}, provides the counting formula.

\begin{theorem}(BEST)\label{BEST} Let $G$ be the adjacency graph of an FSR with vertex set $\{V_1,V_2,\ldots,V_{\ell}\}$.
Let $\M=(m_{i,j})$ be the $\ell \times \ell$ matrix derived from $G$ in which $m_{i,i}$ is the number of edges incident to vertex $V_i$ and $m_{i,j}$ is the negative of the number of edges between vertices $V_i$ and $V_j$ for $i \neq j$. Then the number of the spanning trees of $G$ is the cofactor of any entry of $\M$.
\end{theorem}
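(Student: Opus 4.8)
The plan is to recognize $\M$ as the Laplacian matrix of the multigraph $G$ and to run the classical Matrix--Tree (Kirchhoff) argument, which specializes precisely to the assertion of Theorem~\ref{BEST}. First I would fix an arbitrary orientation of each of the $m$ edges of $G$ and form the $\ell \times m$ oriented incidence matrix $B$, whose $(i,e)$ entry is $+1$ if $V_i$ is the head of $e$, $-1$ if $V_i$ is the tail, and $0$ otherwise. A direct entrywise computation then gives $\M = B B^{\mathsf{T}}$: the diagonal entry $(B B^{\mathsf{T}})_{i,i}$ counts the edges incident to $V_i$, while for $i \neq j$ the entry $(B B^{\mathsf{T}})_{i,j}$ equals the negative of the number of edges joining $V_i$ and $V_j$, exactly matching the definition of $\M$. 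Parallel edges cause no trouble here; they merely produce repeated columns of $B$.

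Next I would reduce the claim to computing a single principal cofactor by showing that all cofactors of $\M$ coincide. Since every column of $B$ carries exactly one $+1$ and one $-1$, each row and each column of $\M$ sums to zero, and a standard linear-algebra lemma forces all $\ell^2$ cofactors of such a matrix to be equal. I may therefore delete, say, row and column $\ell$, obtaining the reduced incidence matrix $B_0$ (the matrix $B$ with row $\ell$ removed), and it suffices to evaluate $\det(B_0 B_0^{\mathsf{T}})$.

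Then I would apply the Cauchy--Binet formula to expand
\begin{equation*}
\det(B_0 B_0^{\mathsf{T}}) = \sum_{S} \det(B_0[S])^2,
\end{equation*}
where $S$ ranges over all $(\ell-1)$-element subsets of the edge set and $B_0[S]$ is the square submatrix of $B_0$ formed by the columns indexed by $S$. The crux, and the step I expect to be the main obstacle, is the combinatorial lemma that
\begin{equation*}
\det(B_0[S]) =
\begin{cases}
\pm 1 & \text{if the edges in $S$ form a spanning tree of $G$,}\\
0 & \text{otherwise.}
\end{cases}
\end{equation*}
If the subgraph on the $\ell-1$ edges of $S$ contains a cycle, then summing the corresponding columns with the signs dictated by traversing that cycle yields the zero vector, so the columns are linearly dependent and the determinant vanishes. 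If instead $S$ is acyclic, then being a forest with $\ell-1$ edges on $\ell$ vertices it is a spanning tree; since a tree on $\ell \geq 2$ vertices has at least two leaves, one leaf $V_i$ has $i \neq \ell$ and its row in $B_0[S]$ contains a single nonzero entry $\pm 1$. Expanding the determinant along that row gives $\pm 1$ times a minor of the same form (corresponding to deleting the leaf and its pendant edge), and an induction on the number of vertices yields $\det(B_0[S]) = \pm 1$.

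Combining the three ingredients finishes the argument: each spanning tree contributes $(\pm 1)^2 = 1$ to the Cauchy--Binet sum and every other subset contributes $0$, so $\det(B_0 B_0^{\mathsf{T}})$ equals the number of spanning trees of $G$. As this determinant is the common value of all cofactors of $\M$, the cofactor of any entry of $\M$ counts the spanning trees, which is the statement of the theorem.
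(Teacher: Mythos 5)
Your proposal is correct, but note that the paper does not actually prove this statement at all: Theorem~\ref{BEST} is quoted as a known result, ``adapted from'' van Aardenne-Ehrenfest and de Bruijn \cite{AEB87}, and used as a black box to count spanning trees of the adjacency graph. What you have supplied is a complete, self-contained proof of the variant the paper states, which is really Kirchhoff's Matrix--Tree theorem for loopless multigraphs rather than the BEST theorem proper (the latter counts Eulerian circuits in directed graphs; the spanning-tree count stated here is the ingredient of it that the paper needs). Your three steps are all sound for this setting: the factorization $\M = BB^{\mathsf{T}}$ is valid with parallel edges, since each parallel edge just repeats a column, and the diagonal and off-diagonal entries come out exactly as in the paper's definition of $\M$; the equality of all cofactors follows from the zero row and column sums; and the Cauchy--Binet expansion together with the $\det(B_0[S]) \in \{0,\pm 1\}$ lemma (cycle dependence in one direction, leaf-deletion induction in the other) correctly isolates the spanning trees. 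One hypothesis you use silently deserves a word: the identity $\M = BB^{\mathsf{T}}$ and the whole argument require $G$ to have no loops (a loop would produce a zero column of $B$ and would also distort the diagonal of $\M$), but this is harmless here because the paper observes immediately after Definition~\ref{def:adjgraph} that the adjacency graph contains no loops by construction. So your proof fills in, rather than diverges from, the paper's citation; what it buys is a verification that the counting formula genuinely applies to the multigraphs (with multiple edges labeled by conjugate pairs) arising in this construction, which the paper takes on faith from the literature.
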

The {\it cofactor} of entry $m_{i,j}$ in $\M$ is $(-1)^{i+j}$ times the determinant of the matrix obtained by deleting the $i$-th row and $j$-th column of $\M$. 
A tool we will need later is the generalized Chinese Remainder Theorem (CRT).

\begin{theorem}(Generalized CRT)~\cite[Thm.~2.4.2]{Ding96}\label{CRT}\\
Let $2 \leq k \in \N$. Given integers $a_1,\ldots,a_k$ and positive integers $m_1,\ldots,m_k$,
there exists $\ell \in \N$ such that 
$\ell \equiv a_i \Mod{m_i} \mbox{ for all } i \in \{1,\ldots,k\}$ if and only if for arbitrary distinct integers $1\leq i\neq j\leq k$, we have $a_i \equiv a_j \Mod{\gcd(m_i,m_j)}$.
	
If $\ell$ is a solution of this system of congruences, then $\ell'$ is also a solution if and only if
$\ell'\equiv\ell\Mod{\lcm(m_1,\ldots,m_k)}$.
\end{theorem}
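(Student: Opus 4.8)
The plan is to prove both directions of the equivalence, to handle the description of the full solution set separately, and to carry out the sufficiency direction by induction on $k$. The necessity (``only if'') direction is immediate: if a common solution $\ell$ exists, then for any distinct $i,j$ we have $\gcd(m_i,m_j) \mid m_i$ and $\gcd(m_i,m_j)\mid m_j$, so reducing $\ell \equiv a_i \Mod{m_i}$ and $\ell \equiv a_j \Mod{m_j}$ modulo $\gcd(m_i,m_j)$ yields $a_i \equiv a_j \Mod{\gcd(m_i,m_j)}$.

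For the sufficiency (``if'') direction I would induct on $k$. The base case $k=2$ reduces to the solvability of a single linear congruence: writing a prospective solution as $\ell = a_1 + m_1 t$, the second congruence becomes $m_1 t \equiv a_2 - a_1 \Mod{m_2}$, which has a solution in $t$ precisely when $\gcd(m_1,m_2) \mid (a_2-a_1)$, which is exactly the hypothesis $a_1 \equiv a_2 \Mod{\gcd(m_1,m_2)}$. For the inductive step, assume the claim for $k-1$ congruences and let $\ell_0$ be a common solution of the first $k-1$, so that $\ell_0 \equiv a_i \Mod{m_i}$ for $1 \le i \le k-1$ and, by the solution-set description established below, $\ell_0$ is determined modulo $M := \lcm(m_1,\ldots,m_{k-1})$. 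It then suffices to solve the two-congruence system $\ell \equiv \ell_0 \Mod{M}$ and $\ell \equiv a_k \Mod{m_k}$, which by the base case is solvable if and only if $\ell_0 \equiv a_k \Mod{\gcd(M, m_k)}$.

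Verifying this last compatibility condition is where I expect the real work to lie, and I would handle it prime by prime. Fix a prime $p$ and let $v_p(\cdot)$ denote the $p$-adic valuation. Choose $j \in \{1,\ldots,k-1\}$ with $v_p(m_j)$ maximal, so that $v_p(M) = v_p(m_j)$ and hence $v_p(\gcd(M,m_k)) = \min(v_p(m_j), v_p(m_k)) = v_p(\gcd(m_j, m_k))$. Since $\ell_0 \equiv a_j \Mod{m_j}$ gives $\ell_0 \equiv a_j \Mod{\gcd(m_j,m_k)}$, while the hypothesis supplies $a_j \equiv a_k \Mod{\gcd(m_j,m_k)}$, we obtain $\ell_0 \equiv a_k \Mod{p^{v_p(\gcd(M,m_k))}}$. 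As $p$ was an arbitrary prime dividing $\gcd(M,m_k)$, these local congruences assemble into $\ell_0 \equiv a_k \Mod{\gcd(M,m_k)}$, completing the induction.

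Finally, for the description of the solution set, suppose $\ell$ is one solution. If $\ell'$ is any other solution then $m_i \mid (\ell'-\ell)$ for every $i$, whence $\lcm(m_1,\ldots,m_k) \mid (\ell'-\ell)$, \ie, $\ell' \equiv \ell \Mod{\lcm(m_1,\ldots,m_k)}$; conversely this congruence forces $\ell' \equiv \ell \equiv a_i \Mod{m_i}$ for all $i$, since each $m_i$ divides the $\lcm$. Because the solutions form a full residue class modulo $\lcm(m_1,\ldots,m_k)$, infinitely many of them are nonnegative, so a solution $\ell \in \N$ indeed exists. I would also remark that the ``$\ell_0$ is determined modulo $M$'' step invoked in the induction is precisely this same argument applied to the first $k-1$ congruences, so no circularity arises.
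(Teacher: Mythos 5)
Your proof is correct and complete. Note, however, that the paper itself offers no proof of this statement: Theorem~\ref{CRT} is imported verbatim from the reference \cite[Thm.~2.4.2]{Ding96}, so there is no in-paper argument to compare against. Measured against the classical treatment, your overall route is the standard one --- reduce to the two-moduli case (solvability of $m_1 t \equiv a_2 - a_1 \Mod{m_2}$ iff $\gcd(m_1,m_2) \mid (a_2-a_1)$) and induct on $k$ --- but your handling of the crux of the inductive step is a nice variant. Textbook proofs typically verify the compatibility $\ell_0 \equiv a_k \Mod{\gcd(M,m_k)}$ by first establishing the arithmetic identity $\gcd(\lcm(m_1,\ldots,m_{k-1}),m_k)=\lcm(\gcd(m_1,m_k),\ldots,\gcd(m_{k-1},m_k))$ and then checking the congruence modulo each $\gcd(m_i,m_k)$; your prime-by-prime argument via $p$-adic valuations proves exactly this fact in passing, with the clean observation that $v_p(\gcd(M,m_k)) = v_p(\gcd(m_j,m_k))$ for $j$ maximizing $v_p(m_j)$, at the cost of invoking unique factorization rather than staying purely at the level of divisibility. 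One small simplification you could make: the appeal to ``$\ell_0$ is determined modulo $M$'' is not actually needed for existence --- to pass from a solution of the two-congruence system $\ell \equiv \ell_0 \Mod{M}$, $\ell \equiv a_k \Mod{m_k}$ back to the full system, you only need the trivial direction that $m_i \mid M$ forces $\ell \equiv \ell_0 \equiv a_i \Mod{m_i}$ --- so the circularity worry you address in your final remark never arises in the first place. Your direct treatment of the solution-set description and of the existence of a \emph{nonnegative} solution $\ell \in \N$ is also correct.
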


Let $\{p_1(x), p_2(x), \ldots,p_s(x)\}$ be a set of $s$ pairwise distinct irreducible polynomials
over $\F_2$ and $n_i:=\deg(p_i(x))$. From hereon, let 
\[
f(x):=\prod_{i=1}^{s} p_i(x) \text{ and } n:=\sum_{i=1}^s n_i.
\]

\section{The Cycle Structure and a State in Each Cycle}\label{sec:structure}

Results needed to determine the cycle structure and a state in each cycle have recently been established in~\cite{CELW16}. To make this work self-contained we briefly reproduce the relevant parts in two parts. The first recalls results on the cycle structure of $\Omega(f(x))$. The second determines a state belonging to each of the cycles in $\Omega(f(x))$.

\subsection{The Cycle Structure of $\Omega(f(x))$}\label{subsec:cycle}

Let $g(x)=x^n+b_{n-1}x^{n-1}+\ldots+b_0 \in \F_2[x]$ be an irreducible polynomial of degree $n$ with
a root $\beta \in\F_{2^n}$. Then there exists a primitive element $\alpha\in\F_{2^n}$ such that
$\beta=\alpha^t$ for some $t \in \N$ and $e=\frac{2^n-1}{t}$ is the order of $\beta$. Using the {\it Zech logarithmic representation} (see. {\it e.g.}, \cite[p. 39]{GG05}),
write $1+ \alpha^{\ell} = \alpha^{\tau_{n}(\ell)}$ where $\tau_n (\ell)$ is the Zech
logarithm relative to $\alpha$. It induces a permutation on $\{1,2,\ldots,2^n-2\}$.
Note that $\tau_n (\ell):=\infty$ for $\ell \equiv 0 \Mod{2^n-1}$ and $\alpha^{\infty}:=0$.

The {\it cyclotomic classes} $\cC_i\subseteq\F_{2^n}$, for $0 \leq i <t$, are
\begin{equation}\label{eq:cyclas}
\cC_i=\{\alpha^{i+ s \cdot t}~|~0\leq s <e\}=\{\alpha^i\beta^s~|~0\leq s<e\}=\alpha^i \cC_0.
\end{equation}
The {\it cyclotomic numbers} $(i,j)_{t}$, for $0\leq i,j <t$, are
\begin{equation}\label{eq:cycnum}
(i,j)_{t} =\left|\{\xi~|~\xi\in \cC_i, \xi+1\in \cC_j\}\right|.
\end{equation}
We know from~\cite[Ch.~4]{GG05} that
\begin{equation}\label{equ:g}
\Omega(g(x))=[\0]\cup[\u_0]\cup[\u_1]\cup\ldots\cup[\u_{t-1}]
\end{equation}
with $[\u_i]$ having period $e$. A way to construct $\Omega(g(x))$ with the property that the cyclotomic classes and the cycles are in a one-to-one correspondence can be found in~\cite[Thm. 3]{HH96}. 
We have proved the following result in~\cite{Chang2017} by using the properties of cyclotomic numbers.

\begin{lemma}\label{lem:irr-saa}
Let $g(x) \in \F_2[x]$ be an irreducible polynomial of degree $n$ and order $e$ (making $ t=\frac{2^n-1}{e}$) with $\Omega(g(x))$ as presented in (\ref{equ:g}).
Then, for each triple $(i,j,k)$ with $0 \leq i, j, k < t$, 
\begin{equation}\label{equ:irr-saa}
(j-i,k-i)_t=\left|\{a~|\u_i+L^a\u_j=L^b\u_k;0\leq a,b <e\}\right|.
\end{equation}
\end{lemma}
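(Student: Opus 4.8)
The plan is to pass from sequences to field elements via the trace (exponential) representation of $\Omega(g(x))$, under which the addition of sequences and the shift operator become, respectively, addition and multiplication by $\beta$ in $\F_{2^n}$. The claimed count then collapses to the defining count of a cyclotomic number after one change of variables.

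First I would set up the representation. Writing $\Tr$ for the trace from $\F_{2^n}$ to $\F_2$, every sequence in $\Omega(g(x))$ is $\s^\theta$ with $s^\theta_\ell=\Tr(\theta\beta^\ell)$ for a unique $\theta\in\F_{2^n}$; the map $\theta\mapsto\s^\theta$ is $\F_2$-linear and bijective (injectivity follows because $\{1,\beta,\ldots,\beta^{n-1}\}$ is a basis and the trace form is nondegenerate), and it intertwines the shift with multiplication by $\beta$, i.e. $L\s^\theta=\s^{\beta\theta}$. Consequently the nonzero cycles are the orbits of $\langle\beta\rangle$ acting by multiplication, which are exactly the cosets $\alpha^i\langle\beta\rangle=\cC_i$; this realises the one-to-one correspondence of~\cite[Thm.~3]{HH96}. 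I would normalise by taking $\u_i=\s^{\alpha^i}$, so that $L^a\u_i=\s^{\alpha^i\beta^a}$ and $\alpha^i\beta^a$ ranges over $\cC_i$ as $a$ runs through $\{0,\ldots,e-1\}$. Since both sides of~(\ref{equ:irr-saa}) are cardinalities that are invariant under replacing each $\u_m$ by a shift $L^{c_m}\u_m$ (such a change merely reindexes $a$ and $b$ by fixed amounts modulo $e$), there is no loss in adopting this normalisation.

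Next I would translate the defining condition. By $\F_2$-linearity of $\theta\mapsto\s^\theta$ together with $L\s^\theta=\s^{\beta\theta}$, the equality $\u_i+L^a\u_j=L^b\u_k$ is equivalent to $\alpha^i+\alpha^j\beta^a=\alpha^k\beta^b$ in $\F_{2^n}$. Because $\u_k$ has period exactly $e$, the value $b$ is uniquely determined by $a$ when it exists, so the set in~(\ref{equ:irr-saa}) is counted by those $a\in\{0,\ldots,e-1\}$ for which $\alpha^i+\alpha^j\beta^a\in\cC_k$; membership in a cyclotomic class automatically excludes the degenerate case $\alpha^i+\alpha^j\beta^a=\0$, which would give the zero sequence. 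Substituting $\xi:=\alpha^{-i}\alpha^j\beta^a=\alpha^{j-i}\beta^a$, as $a$ ranges over $\{0,\ldots,e-1\}$ the element $\xi$ ranges bijectively over $\cC_{j-i}$, and $\alpha^i+\alpha^j\beta^a=\alpha^i(1+\xi)\in\cC_k$ holds if and only if $1+\xi\in\cC_{k-i}$ (indices read modulo $t$, using $\cC_{i+t}=\cC_i$). Hence the count equals $|\{\xi\in\cC_{j-i}:\xi+1\in\cC_{k-i}\}|=(j-i,k-i)_t$ by~(\ref{eq:cycnum}), which is the assertion.

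The only genuinely delicate point is the first paragraph: fixing the correspondence $[\u_i]\leftrightarrow\cC_i$ compatibly with~\cite[Thm.~3]{HH96} and verifying that the right-hand side of~(\ref{equ:irr-saa}) does not depend on the chosen cycle representatives. Once the trace representation and the identity $L\s^\theta=\s^{\beta\theta}$ are in hand, the remaining steps amount to a routine change of variables and index bookkeeping modulo $t$.
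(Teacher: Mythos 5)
Your proof is correct. The paper itself does not reprove this lemma---it defers to \cite{Chang2017}, where the result is established through the cycle--cyclotomic-class correspondence of \cite{HH96} and the properties of cyclotomic numbers---and your argument is essentially that same route: the trace representation $\theta\mapsto\s^{\theta}$ intertwines the shift with multiplication by $\beta$, so the sequence equation $\u_i+L^a\u_j=L^b\u_k$ becomes $\alpha^i+\alpha^j\beta^a=\alpha^k\beta^b$ in $\F_{2^n}$, and the substitution $\xi=\alpha^{j-i}\beta^a$ reduces the count to the defining count of $(j-i,k-i)_t$ in (\ref{eq:cycnum}); your side remarks that $b$ is uniquely determined by $a$ (period exactly $e$) and that the count is invariant under shifting the chosen representatives $\u_m$ correctly dispose of the only delicate points, the latter being exactly the normalisation issue the paper flags when it insists on the correspondence $[\u_i]\leftrightarrow\cC_i$ via Proposition~\ref{prop:corresp}.
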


Given $f(x)$, since $p_i(x)$ is irreducible of degree $n_i$ and order $e_i$, we have $t_i:=\frac{2^{n_i}-1}{e_i}$ and 
\begin{equation}\label{eq:cyc_pi}
\Omega(p_i(x))=[\0] \cup \left[\s^i_0\right] \cup \left[\s^i_1\right]
\cup \ldots \cup \left[\s^i_{t_i-1}\right].
\end{equation}
The cycle structure of $\Omega(f(x))$ can be derived from 
\cite[Thm.~1]{CELW16} by restricting $q$ to $2$. A similar result was established using some properties of LFSRs in~\cite{LL17}. 

\begin{lemma}\label{lem:cycle-f}
Let
$f_i := \displaystyle{
\begin{cases}
e_i & \text{ if } a_i=1\\
1 & \text{ if } a_i=0
\end{cases}}$ and $\delta :=\gcd(f_s,\lcm(f_1,\ldots,f_{s-1}))$. Then 
\begin{equation}\label{eq:shortf}
\Omega(f(x))=\bigcup_{\substack{a_i \in \F_2\\ 1\leq  i \leq s}} ~ \bigcup_{j_1=0}^{t_1-1} \cdots \bigcup_{j_s=0}^{t_s-1}~ \bigcup_{\ell_2=0}^{\gcd(f_2,f_1)-1} \cdots \bigcup_{\ell_{s}=0}^{\delta-1}
\left[a_1\s_{j_1}^{1}+a_2L^{\ell_2}\s_{j_2}^{2}+ \cdots + a_s L^{\ell_{s}} \s_{j_s}^{s}\right].
\end{equation}
\end{lemma}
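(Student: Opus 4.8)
The plan is to split $\Omega(f(x))$ by the Chinese Remainder Theorem and then analyze the diagonal action of the shift $L$. First I would identify $\Omega(f(x))$ with the $\F_2[x]$-module $\F_2[x]/(f(x))$, on which $L$ acts as multiplication by $x$. Because the factors $p_1(x),\ldots,p_s(x)$ are pairwise coprime, the module CRT gives the internal direct sum $\Omega(f(x))=\Omega(p_1(x))\oplus\cdots\oplus\Omega(p_s(x))$, that is, every $\s$ is \emph{uniquely} $\bv_1+\cdots+\bv_s$ with $\bv_i\in\Omega(p_i(x))$. Using (\ref{eq:cyc_pi}), each $\bv_i$ is either $\0$ or a shift $L^{c_i}\s^i_{j_i}$ of a nonzero cycle representative; I record its presence by $a_i\in\F_2$ and its minimal period by $f_i$. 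Since the minimal polynomial of $\s$ is $\prod_{i:a_i=1}p_i(x)$, its period is $\lcm_{i:a_i=1}e_i=\lcm(f_1,\ldots,f_s)$. This already yields the containment $\subseteq$ and fixes the length of every cycle listed in (\ref{eq:shortf}).

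Next, because $L$ acts diagonally, $L^d\s=\sum_i a_iL^{c_i+d}\s^i_{j_i}$, and the decomposition is unique, two sequences with the same $(a_i)$ and $(j_i)$ lie in one cycle exactly when their shift tuples $(c_i)$ differ by a common $d$ modulo the $e_i$. Hence, for fixed $(a_i),(j_i)$, the cycles are the orbits of the diagonal $\Z$-action on $\prod_{i:a_i=1}\Z/e_i\Z$; each orbit has size $\lcm_{i:a_i=1}e_i$, so there are $\prod_{i:a_i=1}e_i\big/\lcm_{i:a_i=1}e_i$ of them. Normalizing the shift of the first active component to $0$ absorbs the diagonal freedom and produces precisely the representatives $a_1\s^1_{j_1}+\sum_{k\geq2}a_kL^{\ell_k}\s^k_{j_k}$ of (\ref{eq:shortf}). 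A telescoping use of $\gcd(u,v)\lcm(u,v)=uv$ shows $\prod_{k=2}^s\gcd(f_k,\lcm(f_1,\ldots,f_{k-1}))=\prod_k f_k\big/\lcm(f_1,\ldots,f_s)$, matching the orbit count; the same identity forces the range of $\ell_k$ to collapse to one value wherever $a_k=0$ or $k$ is the first active index, so the formula is insensitive to which components vanish once one reads the vacuous $j_i$ as a single dummy when $a_i=0$.

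It remains to show that, as $(\ell_2,\ldots,\ell_s)$ runs over $0\leq\ell_k<\gcd(f_k,\lcm(f_1,\ldots,f_{k-1}))$, these representatives meet every cycle exactly once. The natural tool is Theorem \ref{CRT}: solvability of the simultaneous congruences $d\equiv\ell_k'-\ell_k\Mod{e_k}$ over the active indices is governed by the compatibility conditions $\ell_i'-\ell_i\equiv\ell_j'-\ell_j\Mod{\gcd(e_i,e_j)}$, and an induction on $s$ shows the residual freedom left after fixing $\ell_1,\ldots,\ell_{k-1}$ is exactly $\gcd(f_k,\lcm(f_1,\ldots,f_{k-1}))$, which is where the telescoping ranges originate. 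I would, however, finish the distinctness by a cleaner cardinality argument: the listed cycles all sit inside $\Omega(f(x))$ and, by the decomposition, cover it, while summing their lengths over all tuples gives $\sum_{(a_i)\in\F_2^s}\prod_{i:a_i=1}t_ie_i=\prod_{i=1}^s(1+(2^{n_i}-1))=2^n$. Since a union has size at most the sum of the sizes of its pieces, with equality if and only if they are pairwise disjoint, the equality forces the listed cycles to partition $\Omega(f(x))$; in particular none repeats and the representatives are distinct.

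The step I expect to be the main obstacle is exactly this irredundancy, namely verifying that the prescribed $\ell_k$-ranges neither miss nor double-count a cycle. The cardinality shortcut sidesteps the heaviest congruence bookkeeping, but it still depends on pinning down two quantities exactly, the common period $\lcm(f_1,\ldots,f_s)$ of every listed cycle and the precise number of tuples; both rest on the unique CRT decomposition and the diagonal action, and any mishandling of the vacuous indices with $a_i=0$, in particular the harmless over-counting inside the $j_i$-unions, would spoil the clean $2^n$ count.
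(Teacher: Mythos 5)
Your proposal is correct, but a direct comparison with ``the paper's proof'' is not possible: the paper does not prove Lemma~\ref{lem:cycle-f} at all, deferring instead to \cite[Thm.~1]{CELW16} (restricted to $q=2$) and noting a parallel result in \cite{LL17}. Your argument therefore supplies a self-contained proof where the paper outsources one, and its mechanism --- the module-theoretic CRT splitting $\Omega(f(x))=\Omega(p_1(x))\oplus\cdots\oplus\Omega(p_s(x))$, the observation that $L$ acts diagonally so that cycles with fixed $(a_i),(j_i)$ are exactly the orbits of the diagonal $\Z$-action on $\prod_{i:a_i=1}\Z/e_i\Z$, and the telescoping identity $\prod_{k=2}^s\gcd(f_k,\lcm(f_1,\ldots,f_{k-1}))=\prod_k f_k/\lcm(f_1,\ldots,f_s)$ --- is the same circle of ideas the paper itself uses later in the proofs of Theorem~\ref{thm:cp} and Theorem~\ref{th:alg}, where Theorem~\ref{CRT} governs the compatibility of shifts modulo $\gcd(e_i,e_j)$. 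Your cardinality shortcut (total length $\sum_{(a_i)}\prod_{i:a_i=1}t_ie_i=\prod_i 2^{n_i}=2^n$ forces pairwise disjointness) is a clean way to avoid the uniqueness half of the congruence bookkeeping and is a genuine, and arguably more elementary, alternative to verifying one-representative-per-orbit directly.

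Two points should be tightened. First, the sentence ``the listed cycles \ldots by the decomposition, cover it'' is circular as written: the decomposition only shows every cycle has the form $\left[\s^{i_0}_{j_{i_0}}+\sum_{k>i_0}a_kL^{c_k}\s^k_{j_k}\right]$ with \emph{unrestricted} shifts $c_k\in\Z/e_k\Z$; coverage by the truncated ranges $0\leq\ell_k<g_k:=\gcd(f_k,\lcm(f_1,\ldots,f_{k-1}))$ is exactly the content of your sketched induction and must be carried out. Fortunately it closes easily: after fixing $\ell_1,\ldots,\ell_{k-1}$, the admissible shift $d$ is pinned down modulo $\lcm(f_1,\ldots,f_{k-1})$ by Theorem~\ref{CRT}, so the achievable values of $c_k+d$ in $\Z/e_k\Z$ form one coset of the subgroup generated by $\lcm(f_1,\ldots,f_{k-1})$, which is precisely $g_k\Z/e_k\Z$; each such coset meets $\{0,1,\ldots,g_k-1\}$ in exactly one point, giving existence (and, in fact, uniqueness, which makes the cardinality argument optional). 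Second, your handling of the vacuous indices is right but should be made explicit in a final write-up: as printed, (\ref{eq:shortf}) lists each cycle $\prod_{i:a_i=0}t_i$ times through the dummy $j_i$, so the statement is a set equality rather than a literal partition of the index set, and your $2^n$ count is valid only after collapsing those dummies --- which you do, correctly noting also that $g_k=1$ automatically whenever $a_k=0$ or $k$ is the first active index.
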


\subsection{Finding a State belonging to Each Cycle}\label{subsec:state}

Once the number of cycles in $\Omega(f(x))$ is determined, we want to efficiently store them. Recall the state $\s_i$ and its successor $\s_{i+1}$ of an $n$-stage FSR sequence $\s$ with feedback function $h(x_0,\ldots,x_{n-1})$ from Section \ref{sec:prelims}. A {\it state operator} $T_n$ turns $\s_i$ into $\s_{i+1}$ with $s_{i+n} = h(s_i,\ldots,s_{i+n-1})$. The subscript $n$ of $T_{n}$ indicates that $\s_i$ is an $n$-stage state. If $\s_i \in [\s]$ and $e$ is the period of $\s$, then the $e$ distinct states of $[\s]$ are
$\s_i, T \s_i = \s_{i+1},\ldots, T^{e-1} \s_i = \s_{i+e-1}$. It suffices to identify just one state in a given cycle since applying $T$ a suitable number of times generates all $e$ distinct states. To reduce clutters, we use $T$ to denote the state operator for distinct cycles with distinct stages.

Settling this matter is related to~\cite[Problem 2]{LL17}. Deciding if two distinct nonzero states of length $n$ belong to the same cycle is hard to determine if the characteristic polynomial is irreducible but non-primitive. We transform the problem into finding an associated primitive polynomial and use decimation to solve the initial problem.

Let $g(x) \in \F_2[x]$ be an irreducible polynomial of degree $n$, order $e$, and $\beta$ as a root. Hence, $t=\frac{2^n-1}{e}$ and $\Omega(g(x))$ is as given in (\ref{equ:g}). To find a state for each nonzero cycle in $\Omega(g(x))$ we start by searching for a primitive polynomial $q(x)$ of degree $n$ with a root $\alpha$ satisfying $\beta=\alpha^t$. We call $q(x)$ an {\it associated primitive polynomial} of $g(x)$. We know from \cite[Lemma 1]{CELW16} that there are $\displaystyle{\frac{\phi(2^n-1)}{\phi(e)}}$ primitive polynomials that can be associated with $g(x)$ and any one of them can be used here. We use LFSR with characteristic polynomial $q(x)$ and a nonzero initial state to generate an $m$-sequence $\m$. From $\m$, we construct the $t$ distinct $t$-decimation sequences, each of period $e$:
\begin{equation}\label{eq:label}
\u_0=\m^{(t)},\u_1=(L\m)^{(t)},\ldots,\u_{t-1}=(L^{t-1}\m)^{(t)}.
\end{equation} 
These $t$ distinct sequences are in $\Omega(g(x))$.
Starting from an arbitrary $n \cdot t$ consecutive elements of $\m$, one gets $t$ distinct $n$-stage states by decimation. It is then straightforward to verify that each of the derived states corresponds to one nonzero cycle.

To apply Lemma~\ref{lem:irr-saa} later, the correspondence between the cycles and the cyclotomic classes is required. The details can be found in~\cite{HH96}. Hence, one must ensure that $(1,\0)$ is the initial state of $[\u_0]$, corresponding to $\cC_0$. The next proposition shows how to guarantee the correspondence and Algorithm~\ref{algo:state} gives the steps to generate the states. If such correspondence is not necessary, then any nonzero vector in $\F_2^n$ can be used as $\s_0$ in Algorithm~\ref{algo:state}.

\begin{proposition}\label{prop:corresp}
Let a non-primitive irreducible polynomial $g(x)$ and its associated primitive polynomial $q(x)$ be given. Then there exists an initial state $\s_0$ such that $q(x)$ generates an $m$-sequence $\m$ with $(1,\0) \in \F_2^{n}$ as the first $n$ entries in $\m^{(t)}$.
\end{proposition}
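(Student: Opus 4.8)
The plan is to parametrize the $m$-sequences generated by $q(x)$ through the trace function and thereby reduce the required condition to a non-degenerate linear system over $\F_2$.

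First I would invoke the standard trace representation of maximal-length sequences: if $\alpha \in \F_{2^n}$ is a root of the primitive polynomial $q(x)$, then every sequence in $\Omega(q(x))$ has the form $m_i = \Tr(\theta\, \alpha^i)$ for a unique $\theta \in \F_{2^n}$, where $\Tr$ denotes the absolute trace from $\F_{2^n}$ to $\F_2$; here $\theta = 0$ gives $\0$ while any $\theta \neq 0$ gives an $m$-sequence. Choosing the initial state $\s_0$ of the register is then equivalent to choosing $\theta$, since $\s_0 = (\Tr(\theta),\Tr(\theta\alpha),\ldots,\Tr(\theta\alpha^{n-1}))$. Because $\beta = \alpha^t$, the $j$-th entry of the decimated sequence is $m_{tj} = \Tr(\theta\,\alpha^{tj}) = \Tr(\theta\,\beta^{j})$, so demanding that the first $n$ entries of $\m^{(t)}$ equal $(1,\0)$ is exactly the requirement
\[
\Tr(\theta) = 1, \qquad \Tr(\theta\,\beta^{j}) = 0 \text{ for } 1 \le j \le n-1 .
\]

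The crux is then pure linear algebra, and in fact amounts to identifying $\theta$ with a dual-basis vector. Since $g(x)$ is irreducible of degree $n$ with root $\beta$, the element $\beta$ has degree $n$ over $\F_2$, so $\{1,\beta,\ldots,\beta^{n-1}\}$ is an $\F_2$-basis of $\F_{2^n}$. Consider the $\F_2$-linear map $\Phi \colon \F_{2^n} \to \F_2^n$ defined by $\Phi(\theta) = (\Tr(\theta\,\beta^{j}))_{j=0}^{n-1}$. I would show $\Phi$ is a bijection: its kernel consists of those $\theta$ with $\Tr(\theta\gamma) = 0$ for every $\gamma$ in the basis $\{1,\beta,\ldots,\beta^{n-1}\}$, hence for every $\gamma \in \F_{2^n}$, and the non-degeneracy of the trace form then forces $\theta = 0$. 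As the source and target both have $\F_2$-dimension $n$, the map $\Phi$ is an isomorphism. Consequently the target vector $(1,0,\ldots,0)$ has a unique preimage $\theta$, and taking $\s_0 = (\Tr(\theta\,\alpha^i))_{i=0}^{n-1}$ produces an $m$-sequence $\m$ whose decimation $\m^{(t)}$ begins with $(1,\0)$, as claimed.

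The main obstacle is establishing that $\Phi$ is invertible; once the trace representation is in place, this reduces to the non-degeneracy of the trace form together with the basis property of the powers of $\beta$, both of which are standard over $\F_{2^n}$. A minor point to check is that the resulting $\theta$ is nonzero, so that $\m$ is a genuine $m$-sequence rather than $\0$; this is automatic because the prescribed image $(1,\0)$ is nonzero and $\Phi(\0) = \0$.
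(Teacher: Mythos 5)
Your proof is correct, and it takes a genuinely different route from the paper. The paper works directly with the register: it writes down the companion matrix $A$ of $q(x)$, observes that the states reached after $0, t, 2t, \ldots, (n-1)t$ clock pulses are $\s_0, \s_0 A^{t}, \ldots, \s_0 A^{(n-1)t}$, and imposes that their first entries be $1,0,\ldots,0$, obtaining a system of $n$ linear equations in the $n$ unknown coordinates of $\s_0$; it then simply asserts that solving the system yields $\s_0$, leaving the nonsingularity of that system implicit. You instead parametrize $\Omega(q(x))$ by the trace representation $m_i = \Tr(\theta\,\alpha^i)$, translate the decimation condition into $\Tr(\theta\,\beta^{j}) = \delta_{j,0}$ for $0 \le j \le n-1$, and identify $\theta$ as the first element of the dual basis to $\{1,\beta,\ldots,\beta^{n-1}\}$, with invertibility of the map $\theta \mapsto (\Tr(\theta\,\beta^{j}))_{j=0}^{n-1}$ following from the non-degeneracy of the trace form and the fact that $\beta$ has degree $n$ over $\F_2$ (which is where the irreducibility of $g(x)$ enters). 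What your approach buys is a complete existence \emph{and uniqueness} argument: the solvability that the paper's proof glosses over is here an explicit consequence of trace non-degeneracy, and you even get that the solution $\theta$ is unique and nonzero, so $\m$ is a genuine $m$-sequence. What the paper's approach buys is immediacy for computation: the companion-matrix system is stated directly in terms of the register's state vectors and is exactly what one would implement (as in Algorithm 1), with no need to pass through the field $\F_{2^n}$ and the trace map. The two arguments are of course linked, since your $\Phi$ is the coordinate expression of the paper's linear system, but your formulation supplies the missing non-degeneracy justification.
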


\begin{proof}
Using the definition of characteristic polynomial, $\s_0$ can be computed by solving a system of $n$ linear equations. Write $q(x)=x^n + a_{n-1} x^{n-1}+ \ldots + a_1 x + 1$ and let $A$ be its companion matrix: the $n \times n$ matrix whose first row and last column are respectively $(0,\ldots,0,1)$ and $(1,a_1,a_2,\ldots,a_{n-1})^{\intercal}$. The remaining entries form the identity matrix $I_{n-1}$. Then the respective first entry of the state vectors $\s_0, \s_0 A^{t}, \s_0 A^{2t},\ldots,\s_0 A^{(n-1)t}$ must be $1,0,0,\ldots,0$. Solving the system gives us $\s_0$. \qed
\end{proof}

\begin{algorithm}[h!]
\caption{Finding a State in a Nonzero Cycle in $\Omega(g(x))$}
\label{algo:state}
\begin{algorithmic}[1]
\renewcommand{\algorithmicrequire}{\textbf{Input:}}
\renewcommand{\algorithmicensure}{\textbf{Output:}}
\Require An irreducible polynomial $g(x) \in \F_2[x]$.
\Ensure A state $\bv_j$ of each nonzero cycle $[\u_j] \in \Omega(g(x))$.
\State $e \gets$ order of $g(x)$; $t \gets (2^n-1)/e$.
\If {$t=1$}
	\State $\bv_0 \gets (1,0,\ldots,0)\in\F_2^n$ and break.
\Else
	\State $q(x) \gets$ an associated primitive polynomial of $g(x)$.\label{primpoly}
	\State $\bw\gets$ the first $n \cdot t$ consecutive entries of the $m$-sequence generated by $q(x)$ on input $\s_0 \in \F_2^n$.
	\For {$j$ from $0$ to $t-1$}
		\State $\bv_j \gets (L^{j} \bw)^{(t)}$.
	\EndFor
\EndIf
\end{algorithmic}
\end{algorithm}

\begin{example}\label{example1}
Consider $g(x)=x^4+x^3+x^2+x+1$, an irreducible polynomial of order $5$, with $p(x)=x^4+x+1$ as the associated primitive polynomial. The $m$-sequence that $p(x)$ generates on input $(1,0,0,0)$ is $\m=(1,0,0,0,1,0,0,1,1,0,1,0,1,1,1,1,0,0,0,1,0,0,1,1,0)$. Computing $(L^{j} \m)^{(3)}$ with $j \in \{0,1,2\}$ gives us $[\u_j]$. We take as the respective initial states the first $4$ entries of $(L^{j} \m)^{(3)}$: $(1,0,0,0)$, $(0,1,1,1)$, and $(0,0,1,0)$.
\end{example}

Based on known states of the cycles in $\Omega(p_i(x))$, we %can efficiently
determine a state in each of the cycles in $\Omega(f(x))$. For each $1 \leq i \leq s$, construct the $n_i \times n$ matrix $\P_i$ in the following manner.
The $j$-th row of $\P_i$ is the first $n$ bits of the sequence generated by the LFSR with
characteristic polynomial $p_i(x)$ whose $n_i$-stage initial state has $1$ in the $j$-th position and $0$ elsewhere.
We combine the resulting matrices into the full-rank (see~\cite[Lemma 7]{CELW16}) matrix
\[
\P_{n \times n}=
\begin{pmatrix}
\P_1\\
\P_2\\
\vdots\\
\P_s
\end{pmatrix}.
\]

Let $\P$ be already constructed. Let $\bv \in \F_2^{n}$ and $\a_i \in \F_2^{n_i}$, with $1\leq i\leq s$, be respectively
the $n$-stage and $n_i$-stage states of the sequences in $\Omega(f(x))$ and $\Omega(p_i(x))$. There is a bijection between $\bv$ and $(\a_1,\a_2,\ldots,\a_s)$ via $\bv=(\a_1,\a_2,\ldots,\a_s) \P$. Note that $\P$ and $T$ commute since $T \bv =
T[(\a_1,\a_2,\ldots,\a_s) \P]=(T\a_1,T\a_2,\ldots,T\a_s) \P $.
Hence, any sequence $\s \in \Omega(f(x))$ with initial state $\bv$ can be written as the sum of
sequences $\s_i$ from $\Omega(p_i(x))$ with corresponding initial states $\a_i$ for all $i$.
We can conveniently use $(\a_1,\a_2,\ldots,\a_s)$ to represent the state $\bv$. A sequence $\s$ generated by $f(x)$ has a longer period and is more complex to study than the {\it component sequences} $\s_i$.
Representing the state $\bv$ of $\s$ in terms of states
$\a_i$ of corresponding sequences $\s_i$ helps us study the global properties while relying on local properties only. The new state representation offers a significant gain in storage efficiency.
In addition, a state belonging to each cycle in $\Omega(f(x))$ can be quickly computed using the representation.

Suppose that we have obtained the set $A_i$ of states corresponding to
the $t_i +1$ distinct cycles in $\Omega(p_i(x))$ given in (\ref{eq:cyc_pi}). Letting
$\0$ be the state of $[\0]$ and $\a^i_j$ a nonzero state of $[\s^i_j]$,
$
A_i:=\{\a^i_0,\a^i_1,\ldots,\a^i_{t_i-1},\a^i_{t_i}=\0\}$. For convenience, let each of the states be the initial state of its corresponding sequence. Then one takes
\begin{equation}\label{eq:state}
\bv=\left(\a^1_{j_1},\a^2_{j_2},\ldots,\a^s_{j_s}\right) \P \text{ with } \a^i_{j_i}\in A_i
\end{equation}
as an initial state of a sequence $\s \in \Omega(f(x))$. Note that $\s$ has the form $a_1\s^1_{j_1}+a_2\s^2_{j_2}+\ldots+a_s\s^s_{j_s}$, where $a_i \s^i_{j_i}=\0$ if $a_i=0$. For all other cases, $a_i=1$. For $1 \leq i \leq s$, let $\ell_i$ be a nonnegative integer and let $\bv$ be as
given in (\ref{eq:state}). By the properties of $\P$ and $T$, $\bw=\left(T^{\ell_1}\a^1_{j_1},T^{\ell_2}\a^2_{j_2},\ldots,T^{\ell_s}\a^s_{j_s}\right) \P$ is a state of cycle
$
\left[a_1L^{\ell_1}\s^1_{j_1}+a_2L^{\ell_2}\s^2_{j_2}+\ldots+a_sL^{\ell_s}\s^s_{j_s}\right]=
\left[a_1\s^1_{j_1}+a_2L^{\ell_2-\ell_1}\s^2_{j_2}+\ldots+a_sL^{\ell_s-\ell_1}\s^s_{j_s}\right]$. We quickly find a state belonging to any cycle. 
%in $\Omega(f(x))$.

\begin{example}\label{example2}
Let $\displaystyle{f(x)=\underbrace{(x+1)}_{=p_1(x)} \underbrace{(x^2+x+1)}_{=p_2(x)} \underbrace{(x^4+x^3+x^2+x+1)}_{=p_3(x)}}$. Notice that $p_3(x)$ is not primitive. One gets $\P_1=[\1]$,
$\P_2=
\begin{pmatrix}
1 & 0 & 1 & 1 & 0 & 1 & 1  \\
0 & 1 & 1 & 0 & 1 & 1 & 0
\end{pmatrix}$,
$\P_3=
\begin{pmatrix}
1 & 0 & 0 & 0 & 1 & 1 & 0  \\
0 & 1 & 0 & 0 & 1 & 0 & 1  \\
0 & 0 & 1 & 0 & 1 & 0 & 0  \\
0 & 0 & 0 & 1 & 1 & 0 & 0
\end{pmatrix}$, from which $\P$ and $\P^{-1}$ immediately follow.
The relevant cycles and sets of states are
\begin{align*}
\Omega(p_1(x))=&[\0]\cup[\s^1_0=\1], A_1=\{\a^1_0=(1),\a^1_1=(0)\},\\
\Omega(p_2(x))=&[\0]\cup[\s^2_0=(1,0,1)], A_2=\{\a^2_0=(1,0),\a^2_1=(0,0)\},\\
\Omega(p_3(x))=&[\0]\cup[\s^3_0=(1,0,0,0,1)]\cup[\s^3_1=(0,1,1,1,1)]
\cup[\s^3_2=(0,0,1,0,1)],\\
A_3=&\{\a^3_0=(1,0,0,0),\a^3_1=(0,1,1,1),\a^3_2=(0,0,1,0),\a^3_3=(0,0,0,0)\}.
\end{align*}
The periods of the nonzero sequences in $\Omega(p_i(x))$ are $1,3$, and $5$. We write each cycle in $\Omega(f(x))$ as
$\left[a_1\1+a_2\s^2_0+a_3\s^3_j\right]$ with $a_i \in \F_2$ and $j \in \{0,1,2\}$. Choosing $\a=\left(\a^1_0,\a^2_0,\a^3_0\right)=(1,1,0,1,0,0,0)$ implies that $\bv= \a \P=(1,1,0,0,0,1,0)$ is a state of $\left[\1+\s^2_0+\s^3_0\right]$. A state of each of the cycles in $\Omega(f(x))$ can be similarly derived. Table~\ref{table:ex} lists them down.
	
%\vspace{-0.2cm}

\begin{table}[h!]
\caption{List of States and Respective Cycles}
\label{table:ex}
\renewcommand{\arraystretch}{1.3}
\centering
\begin{tabular}{cccc}
\hline
Apply $\P$ to & State  & Cycle & Period \\
\hline
$\left(\a^1_1,\a^2_1,\a^3_3\right)$ & $\0$ & $[\0]$ & $1$ \\
$\left(\a^1_1,\a^2_0,\a^3_3\right)$ & $(1,0,1,1,0,1,1)$ & $\left[\s_0^{2}\right]=[(1,0,1)]$ & $3$ \\
			
$\left(\a^1_1,\a^2_1,\a^3_0\right)$ & $(1,0,0,0,1,1,0)$ & $\left[\s_0^{3}\right]=[(1,0,0,0,1)]$ & $5$\\
			
$\left(\a^1_1,\a^2_1,\a^3_1\right)$ & $(0,1,1,1,1,0,1)$ & $\left[\s_1^{3}\right]=[(0,1,1,1,1)]$ & $5$\\
			
$\left(\a^1_1,\a^2_1,\a^3_2\right)$ & $(0,0,1,0,1,0,0)$ & $\left[\s_2^{3}\right]=[(0,0,1,0,1)]$ & $5$\\
			
$\left(\a^1_1,\a^2_0,\a^3_0\right)$ & $(0,0,1,1,1,0,1)$ & $\left[\s^2_0+\s^3_0\right]
=[(0,0,1,1,1,0,1,0,1,0,1,1,1,0,0)]$ & $15$ \\
			
$\left(\a^1_1,\a^2_0,\a^3_1\right)$ & $(1,1,0,0,1,1,0)$ & $\left[\s^2_0+\s^3_1\right]
=[(1,1,0,0,1, 1,0,1,0,0, 0,0,0,1,0)]$ & $15$ \\
			
$\left(\a^1_1,\a^2_0,\a^3_2\right)$ & $(1,0,0,1,1,1,1)$ & $\left[\s^2_0+\s^3_2\right]
=[(1,0,0,1,1,1,1,1,1,0,0,1,0,0,0)]$ & $15$ \\
			
$\left(\a^1_0,\a^2_1,\a^3_3\right)$ & $\1$ & $[\1]$ & $1$ \\
$\left(\a^1_0,\a^2_0,\a^3_3\right)$ & $(0,1,0,0,1,0,0)$ & $\left[\1+\s_0^{2}\right]=[(0,1,0)]$ & $3$\\
			
$\left(\a^1_0,\a^2_1,\a^3_0\right)$ & $(0,1,1,1,0,0,1)$ & $\left[\1+\s_0^{3}\right]=[(0,1,1,1,0)]$ & $5$ \\
			
$\left(\a^1_0,\a^2_1,\a^3_1\right)$ & $(1,0,0,0,0,1,0)$ & $\left[\1+\s_1^{3}\right]=[(1,0,0,0,0)]$ & $5$ \\
			
$\left(\a^1_0,\a^2_1,\a^3_2\right)$ & $(1,1,0,1,0,1,1)$ & $\left[\1+\s_2^{3}\right]=[(1,1,0,1,0)]$ & $5$ \\
			
$\left(\a^1_0,\a^2_0,\a^3_0\right)$ & $(1,1,0,0,0,1,0)$ & $\left[\1+\s^2_0+\s^3_0\right]
=[(1,1,0,0,0,1,0,1,0,1,0,0,0,1,1)]$ & $15$ \\
			
$\left(\a^1_0,\a^2_0,\a^3_1\right)$ & $(0,0,1,1,0,0,1)$ & $\left[\1+\s^2_0+\s^3_1\right]
=[(0,0,1,1,0, 0,1,0,1,1, 1,1,1,0,1)]$ & $15$ \\
			
$\left(\a^1_0,\a^2_0,\a^3_2\right)$ & $(0,1,1,0,0,0,0)$ & $\left[\1+\s^2_0+\s^3_2\right]
=[(0,1,1,0,0, 0,0,0,0,1, 1,0,1,1,1)]$ & $15$ \\
\hline
\end{tabular}
\end{table}
\end{example}

Example~\ref{example2} demonstrates how our approach quickly finds a state belonging to any cycle in $\Omega(f(x))$. This works in general, not only when the periods are coprime, as will be shown in Section~\ref{sec:reproots} below. 
%A more comprehensive treatment is available in~\cite{CELW16}.

\section{Properties of the Adjacency Graph of $\Omega(f(x))$}\label{sec:graph}
To use Lemma~\ref{lem:irr-saa}, we assume that each nonzero cycle $[\s^i_j] \in \Omega(p_i(x))$ in (\ref{eq:cyc_pi}) corresponds 
to the suitable cyclotomic class. No generality is loss here. If the two do not correspond, the resulting adjacency graph would be permutation equivalent to the one we obtain.

Let $C_1=[\s_1]$ and $C_2=[\s_2]$ be two cycles in $\Omega(f(x))$ and let the special state
$\bS:=(1,\0) \in \F_2^{n}$ belong to $[\s_0] \in \Omega(f(x))$.
If $(\bv,\widehat{\bv})$ is a conjugate pair between $C_1$ and $C_2$, then $\widehat{\bv}+\bv=\bS$.
Thus, $C_1$ and $C_2$ share at least a conjugate pair if and only if
some shift of $\s_1$ plus some shift of $\s_2$ is equal to $\s_0$, \ie,
there exist integers $\ell$ and $\ell'$ satisfying $L^{\ell}\s_1+L^{\ell'}\s_2=\s_0$. Since $\bS $ has $n-1$ consecutive $0$s, $f(x)$ is the minimal polynomial of $[\s_0]$. Hence, $[\s_0]$ can be determined with the
help of $\P^{-1}$ and Equation (\ref{eq:state}). Without loss of generality, let
\begin{equation}\label{eq:cons}
[\s_0]=\left[L^{c_1}\s_{d_1}^{1}+L^{c_2}\s_{d_2}^{2}+ \ldots + L^{c_{s}}\s_{d_s}^s\right]
\end{equation}
where $c_1,c_2,\ldots,c_{s}$ and $d_1,d_2,\ldots, d_{s}$ are some suitable integers.
In fact, if the initial states are appropriately chosen, one can make $c_k=d_k=0$ for
$1 \leq k \leq s$. For our purposes, doing so is not required. 

It is clear that $[\0]$ and $[\s_0]$ share a unique conjugate pair. To compute the exact number of conjugate pairs between $C_1$ and $C_2$,
knowing important properties of the cycles in $\Omega(p_i(x))$ is crucial. Suppose that we have \begin{equation}\label{ss:tuple-p_i}
\Gamma_{i}(j_1,j_2):=\{(u,v)|L^u\s^i_{j_1}+L^v\s^i_{j_2}=\s_{d_i}^i,~ 0\leq u, v<e_i\}.
\end{equation}
If $\s^i_{j_1}=\0$, then $u=0$. If $\s^i_{j_2}=\0$, then $v=0$.
If only one of $\s^i_{j_1}$ and $\s^i_{j_2}$ is $\0$, then the other must be $\s_{d_i}^i$
and $\Gamma_{i}(j_1,j_2)=\{(0,0)\}$.
If both $\s^i_{j_1}$ and $\s^i_{j_2}$ are $\neq \0$, then
$|\Gamma_{i}(j_1,j_2)|=(j_1-d_i,j_2-d_i)_{t_i}=(j_2-d_i,j_1-d_i)_{t_i}$ defined in Lemma \ref{lem:irr-saa}. Following Equation (\ref{eq:shortf}), we let
\begin{equation}\label{equ:cc}
C_1=[\s_1]=\left[a_1L^{\ell_1}\s_{j_1}^{1}+
\ldots+a_sL^{\ell_{s}}\s_{j_s}^{s}\right] \mbox{ and } C_2=[\s_2]=\left[a'_1L^{\ell_1'}\s_{j_1'}^{1}+
\ldots+a_s'L^{\ell_{s}'}\s_{j_s'}^{s}\right].
\end{equation}
Both $\s_1$ and $\s_2$ have period $\lcm(f_1,\ldots,f_s)$. Let
$E_1:=\{i~|~a_i=1, 1\leq i\leq s\}$ and $E_2:=\{j~|~a_j'=1, 1 \leq j \leq s\}$.

\begin{theorem}\label{thm:cp}
Take $C_1$ and $C_2$ from (\ref{equ:cc}) and $\Gamma_{i}(j_i,j_i')$ in (\ref{ss:tuple-p_i}).
\begin{enumerate}
\item The following conditions are necessary for $C_1$ and $C_2$ to share at least a conjugate pair.
	\begin{enumerate}
	\item $E_1\cup E_2=\{1,2,\ldots,s\}$.
	\item If $i \in E_1 \cap {E_2}^{\setcom}$, then $\s_{j_i}^{i}=\s_{d_i}^{i}$. Similarly, if $i \in {E_1}^{\setcom} \cap E_2$, then $\s_{j_i'}^{i}=\s_{d_i}^{i}$.
	\item If $i \in E_1 \cap E_2$, then $(j_{i}-d_i,j'_{i}-d_i)_{t_i}>0$.
\end{enumerate}
\item The number of conjugate pairs shared by cycles $C_1$ and $C_2$ is equal to the number of tuples $((u_1,v_1),\ldots,(u_s,v_s))$ that satisfy two requirements.
\begin{enumerate}
	\item $(u_i,v_i)\in \Gamma_{i}(j_i,j_i')$ for $1 \leq i \leq s$.
	\item The following systems of congruences hold modulo $\gcd(e_i,e_j)$.
	\begin{align*}
	&c_i+u_i-\ell_i\equiv  c_j+u_j-\ell_j  \mbox{ for all }i\neq j\in E_1,\\
	&c_i+v_i-\ell_i'\equiv c_j+v_j-\ell_j' \mbox{ for all }i\neq j\in E_2.
	\end{align*}
	\end{enumerate}
	\item The sum of the numbers of conjugate pairs between any two cycles over all possible values for $\ell_1, \ldots, \ell_s$ and $\ell_1', \ldots, \ell_s'$ is equal to $\displaystyle \prod_{i \in E_1 \cap E_2} (j_{i}-d_i,j_{i}'-d_i)_{t_i}$.
		
	As the $\ell$s and $\ell'$s range through all of their respective values, it may happen that $C_1=C_2$. When this is the case, we count the conjugate pairs $(\bv,\widehat{\bv})$ and $(\widehat{\bv},\bv)$ separately even though they are the same.
\end{enumerate}
\end{theorem}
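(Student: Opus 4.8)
The plan is to reduce the entire statement to one counting identity, controlled by the uniqueness of the component decomposition over the distinct factors $p_i(x)$ together with the generalized CRT (Theorem~\ref{CRT}), and then to obtain Part~3 by a single summation. First I would set up the core reduction. If $(\bv,\widehat{\bv})$ is a conjugate pair with $\bv\in C_1$ and $\widehat{\bv}\in C_2$, then $\bv+\widehat{\bv}=\bS$; since $\bS$ has $n-1$ trailing zeros, $f(x)$ is the minimal polynomial of $[\s_0]$ and any two sequences in $\Omega(f(x))$ sharing one $n$-state coincide. Hence a conjugate pair is exactly a pair of shifts $(a,b)$ with $L^a\s_1+L^b\s_2=\s_0$. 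Writing $\s_0=\sum_{i=1}^{s}L^{c_i}\s^i_{d_i}$ and invoking the correspondence via $\P$ to make the component decomposition unique, I would split this identity into the $s$ independent equations $L^{\ell_i+a}\s^i_{j_i}+L^{\ell_i'+b}\s^i_{j_i'}=L^{c_i}\s^i_{d_i}$, treating an absent component as $\0$. Reading these off yields Part~1: $i\notin E_1\cup E_2$ would force $\s^i_{d_i}=\0$, which is impossible; $i\in E_1\cap{E_2}^{\setcom}$ forces $\s^i_{j_i}=\s^i_{d_i}$ and symmetrically; and $i\in E_1\cap E_2$ needs $\Gamma_i(j_i,j_i')\neq\emptyset$, i.e. $(j_i-d_i,j_i'-d_i)_{t_i}>0$ by Lemma~\ref{lem:irr-saa}.

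For Part~2 I would convert each component equation into a residue condition on the shifts. Putting $u_i\equiv\ell_i+a-c_i$ and $v_i\equiv\ell_i'+b-c_i\pmod{e_i}$ turns the $i$-th equation into $(u_i,v_i)\in\Gamma_i(j_i,j_i')$, whence $a\equiv c_i+u_i-\ell_i$ and $b\equiv c_i+v_i-\ell_i'\pmod{e_i}$. The single shift $a$ must meet these simultaneously for all $i\in E_1$; by the generalized CRT such an $a$ exists iff the pairwise conditions $c_i+u_i-\ell_i\equiv c_j+u_j-\ell_j\pmod{\gcd(e_i,e_j)}$ hold, and then $a$ is unique modulo $\lcm\{e_i:i\in E_1\}=\per(\s_1)$; symmetrically for $b$ over $E_2$. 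These are precisely the congruences of Part~2, and uniqueness modulo the period makes the assignment $((u_1,v_1),\dots,(u_s,v_s))\mapsto(\bv,\widehat{\bv})$ a bijection onto the shared conjugate pairs, establishing Part~2.

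Part~3 then follows by summing over all cycles of the two prescribed types, i.e. letting the $\ell$s and $\ell'$s run through the values enumerating the distinct cycles of~(\ref{eq:shortf}). The decisive step is to absorb the within-cycle shifts $a,b$ of Part~2 into this sum: as the cycle-selecting shifts range over their reduced ranges and $a$ over a full period, the induced tuple $(\ell_i+a\bmod e_i)_{i\in E_1}$ sweeps each element of $\prod_{i\in E_1}\Z/e_i$ exactly once, because by Lemma~\ref{lem:cycle-f} the number of such cycles times $\per(\s_1)$ equals $\prod_{i\in E_1}e_i$. Thus the summed count equals the number of ordered pairs $\bigl((m_i)_{i\in E_1},(m_i')_{i\in E_2}\bigr)$ with $\sum_{i\in E_1}L^{m_i}\s^i_{j_i}+\sum_{i\in E_2}L^{m_i'}\s^i_{j_i'}=\s_0$. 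By the uniqueness of the decomposition this constraint decouples across $i$: each $i\in E_1\cap E_2$ admits $|\Gamma_i(j_i,j_i')|=(j_i-d_i,j_i'-d_i)_{t_i}$ free choices of $(m_i,m_i')$, while each $i$ in the symmetric difference admits the single forced choice of Part~1 (and if that forced match is impossible the sum is trivially zero). Multiplying gives $\prod_{i\in E_1\cap E_2}(j_i-d_i,j_i'-d_i)_{t_i}$. Since these pairs are counted as ordered, a collision $C_1=C_2$ contributes $(\bv,\widehat{\bv})$ and $(\widehat{\bv},\bv)$ separately, exactly as the stated convention prescribes.

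I expect the main obstacle to be this final reparametrization: justifying rigorously that summing the intrinsic counts over the reduced cycle-ranges is the same as choosing each component shift freely and independently, so that the generalized-CRT compatibility constraints of Part~2 are exactly absorbed and the total collapses to the product of cyclotomic numbers. The delicate bookkeeping is to keep the ordered count honest when a $C_1$-type cycle coincides with a $C_2$-type cycle, matching the theorem's double-counting convention.
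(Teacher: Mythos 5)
Your proposal is correct and, in Parts 1 and 2, follows essentially the same route as the paper: a shared conjugate pair is translated into $L^{\ell}\s_1+L^{\ell'}\s_2=\s_0$, this identity is split componentwise using the fact that the $\s^i_{d_i}$ have distinct irreducible minimal polynomials (yielding the necessary conditions of Part 1 and the memberships $(u_i,v_i)\in\Gamma_{i}(j_i,j_i')$), and the generalized CRT (Theorem~\ref{CRT}) converts solvability for $(\ell,\ell')$ into exactly the pairwise congruences modulo $\gcd(e_i,e_j)$. The one place you genuinely diverge is Part 3: the paper disposes of it in a sentence, asserting that as the $\ell$s and $\ell'$s range over their values each tuple $((u_1,v_1),\ldots,(u_s,v_s))$ is realized by some corresponding $(\ell,\ell')$, whereas you prove this step by an explicit reparametrization --- uniqueness of $\ell$ modulo $\lcm\{e_i : i\in E_1\}=\per(\s_1)$ (which also upgrades the tuple-to-pair correspondence of Part 2 to a genuine bijection, a point the paper leaves implicit) combined with the counting identity that the number of cycles of a fixed type times the common period equals $\prod_{i\in E_1}e_i$, which is precisely what the reduced shift ranges in Lemma~\ref{lem:cycle-f} encode. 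What your version buys is a fully justified equality in Part 3 (ruling out that a tuple is realized more than once or not at all, so the product is an exact count rather than a bound), at the cost of the extra bookkeeping you flag; your ordered-count treatment of the $C_1=C_2$ collision matches the paper's double-counting convention exactly.
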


\begin{proof}
If $C_1$ and $C_2$ share a conjugate pair, then there exist integers $\ell$ and $\ell'$ satisfying $L^{\ell}\s_1+L^{\ell'}\s_2=\s_0$. By definitions,
\[
L^{\ell}\s_1+L^{\ell'}\s_2=\sum_{i=1}^{s}a_iL^{\ell+\ell_i}\s_{j_i}^{i}+a'_iL^{\ell'+\ell_i'}\s_{j_i'}^{i}= \\
\sum_{i=1}^{s}L^{c_i} \left[a_iL^{\ell+\ell_i-c_i}\s_{j_i}^{i}+a'_iL^{\ell'+\ell_i'-c_i}\s_{j_i'}^{i}\right]
=\sum_{i=1}^{s}L^{c_i}\s_{d_i}^{i}.
\]
One gets $a_iL^{\ell+\ell_i-c_i}\s_{j_i}^{i}+a'_iL^{\ell'+\ell_i'-c_i}\s_{j_i'}^{i}=\s_{d_i}^{i}$ since sequences $\s_{d_i}^{i}$ correspond to distinct irreducible characteristic polynomials. Therefore, $a_i \vee a_i'=1$, $(\ell+\ell_i-c_i, \ell'+\ell_i'-c_i)\in \Gamma_i(j_i,j_i')$, and $|\Gamma_i(j_i,j_i')|>0$, proving Statement 1.
	
Let $(u_i,v_i)$ be one of the $(j_{i}-d_i,j_{i}'-d_i)_{t_i}$ tuples in $\Gamma_i(j_{i},j_{i}')$.
To ensure the existence of $\ell$ and $\ell'$, the following systems of congruences must have solutions modulo $f_i$ for all $i$:
\[
\begin{cases}
u_i \equiv \ell+\ell_i-c_i \\
v_i\equiv\ell'+\ell_i'-c_i
\end{cases} \iff \quad 
\begin{cases}
\ell\equiv c_i+u_i-\ell_i \\
\ell'\equiv c_i+v_i-\ell_i'
\end{cases}.
\]
By the definition of $f_i$, with $u_i=v_i=0$ whenever $i \notin E_1 \cap E_2$, the systems reduce to
\[
\begin{cases}
\ell \equiv c_i+u_i-\ell_i \Mod{e_i}  \mbox{ for all } i \in E_1\\
\ell'\equiv c_i+v_i-\ell_i' \Mod{e_i} \mbox{ for all } i \in E_2
\end{cases}.
\]
	
Theorem \ref{CRT} says that solutions exist if and only if the following systems of congruences hold modulo $\gcd(e_i,e_j)$:
\[
\begin{cases}
c_i+u_i-\ell_i \equiv  c_j+u_j-\ell_j \mbox{ for all } i\neq j\in E_1\\
c_i+v_i-\ell_i'\equiv c_j+v_j-\ell_j' \mbox{ for all } i\neq j\in E_2
\end{cases}.
\]
This completes the proof of Statement 2.
	
As $\ell_i$ and $\ell_i'$ range through their possible values, for a given $(u_i,v_i)$ there exists a corresponding $(\ell, \ell')$ satisfying the above systems simultaneously. There are $(j_{i}-d_i,j_{i}'-d_i)_{t_i}$ possible choices for $(u_i,v_i)$ if both corresponding sequences are $\neq \0$ but only one choice if one of them is $\0$. Thus, the sum of the numbers of conjugate pairs between $C_1$ and $C_2$ over all possible $\ell_i$s and $\ell_i'$s is $\displaystyle{\prod_{i \in E_1 \cap E_2} (j_{i}-d_i,j_{i}'-d_i)_{t_i}}$. When $C_1=C_2$, we double count each of their conjugate pairs since both $(\bv,\widehat{\bv})$ and $(\widehat{\bv},\bv)$ appear and are counted separately, despite being exactly the same. Statement 3 is established. \qed
\end{proof}

\begin{remark}\label{rem:Liresult}
A similar result to Theorem~\ref{thm:cp} was given in~\cite[Thm.~4]{LL17}. We provide the proof above using our method.
\end{remark}

To determine the number of conjugate pairs between two chosen cycles, for each possible $(\ell_i,\ell_i')$ tuple, we check how many $((u_1,v_1),\ldots,(u_s,v_s))$ tuples make the systems solvable. We will provide an efficient procedure to do so. Theorem \ref{thm:cp} says that in general it is hard to determine
the exact number of conjugate pairs between any two cycles in $\Omega(f(x))$.
In some special cases the problem becomes easier. If, for all $1 \leq i \leq s$, the periods of the nonzero sequences in $\Omega(p_i(x))$
are pairwise coprime, then, by Lemma \ref{lem:cycle-f}, $[\s_0] =\left[\s_{d_1}^{1} + \s_{d_2}^{2} + \ldots + \s_{d_s}^s\right]$,
\begin{equation}\label{eq:copr}
C_1=[\s_1] =\left[a_1\s_{j_1}^{1} + \ldots+ a_s\s_{j_s}^{s}\right]\mbox{, and }
C_2=[\s_2] =\left[a'_1\s_{j_1'}^{1} + \ldots+ a_s'\s_{j_s'}^{s}\right].
\end{equation}
We obtain the following corollary to Theorem~\ref{thm:cp}.

\begin{corollary}\label{cor:1}
For all $1 \leq i \leq s$ let the periods of the nonzero sequences in $\Omega(p_i(x))$
be pairwise coprime. Let $C_1$ and $C_2$ be as given in (\ref{eq:copr}). They share at least a conjugate pair
if and only if the following requirements are satisfied for all $i$.
\begin{enumerate}
\item $E_1\cup E_2=\{1,2,\ldots,s\}$.
\item If $a_i=1$ and $a_i'=0$, then $\s_{j_i}^{i}=\s_{d_i}^{i}$.
\item If $a_i=0$ and $a_i'=1$, then $\s_{j_i'}^{i}=\s_{d_i}^{i}$.
\item If $a_i=a_i'=1$, then $(j_{i}-d_i,j_{i}'-d_i)_{t_i}>0$.
\end{enumerate}
	
Distinct $C_1$ and $C_2$ share $\displaystyle{\prod_{i \in E_1 \cap E_2}(j_{i}-d_i,j_{i}'-d_i)_{t_i}}$
conjugate pairs. We halve the number when $C_1=C_2$.
\end{corollary}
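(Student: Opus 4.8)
The plan is to obtain Corollary~\ref{cor:1} as a direct specialization of Theorem~\ref{thm:cp} to the pairwise-coprime regime, in which the shift parameters vanish and the CRT congruences become vacuous. First I would record the structural simplification behind the representation in~(\ref{eq:copr}). Since the nonzero periods $e_i$ are pairwise coprime, each $f_i \in \{1, e_i\}$ is coprime to every $f_j$ with $j \neq i$, so all of the bounds $\gcd(f_2, f_1), \ldots, \delta$ that serve as upper limits of the shift unions in~(\ref{eq:shortf}) equal $1$. By Lemma~\ref{lem:cycle-f} each such union therefore reduces to its single $\ell_k = 0$ term, and every cycle of $\Omega(f(x))$ acquires the shift-free form displayed in~(\ref{eq:copr}). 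In particular $[\s_0]$, $C_1$, and $C_2$ carry no $c$, $\ell$, or $\ell'$ parameters, so I may take $c_i = \ell_i = \ell_i' = 0$ for all $i$ when invoking Theorem~\ref{thm:cp}.

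Necessity is then read off with no extra work: conditions~1--4 of the corollary are precisely Statement~1 of Theorem~\ref{thm:cp} rewritten in the present notation. Condition~1 is $E_1 \cup E_2 = \{1, \ldots, s\}$; the requirements that the nonzero component equal $\s_{d_i}^i$ whenever exactly one of $a_i, a_i'$ is $1$ are the two cases $i \in E_1 \cap {E_2}^{\setcom}$ and $i \in {E_1}^{\setcom} \cap E_2$; and condition~4 is $(j_i - d_i, j_i' - d_i)_{t_i} > 0$ for $i \in E_1 \cap E_2$.

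The substance of the argument is to promote these necessary conditions to a genuine equivalence and to pin down the count, for which I would invoke Statement~2 of Theorem~\ref{thm:cp}. The number of shared conjugate pairs equals the number of tuples $((u_1, v_1), \ldots, (u_s, v_s))$ with $(u_i, v_i) \in \Gamma_i(j_i, j_i')$ for every $i$ that also satisfy the congruences modulo $\gcd(e_i, e_j)$. By pairwise coprimality $\gcd(e_i, e_j) = 1$ for $i \neq j$, so those congruences hold automatically and impose no constraint whatsoever; the count collapses to $\prod_{i=1}^{s} |\Gamma_i(j_i, j_i')|$. Under conditions~1--4, condition~1 forbids the both-zero index, each index with exactly one nonzero component contributes the single tuple $(0,0)$ coming from the analysis of $\Gamma_i$ following~(\ref{ss:tuple-p_i}), and each $i \in E_1 \cap E_2$ contributes $(j_i - d_i, j_i' - d_i)_{t_i}$ tuples by Lemma~\ref{lem:irr-saa}. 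The product is thus $\prod_{i \in E_1 \cap E_2}(j_i - d_i, j_i' - d_i)_{t_i}$, a positive integer (an empty product being $1$); in particular it is at least $1$, so a conjugate pair exists and the conditions are also sufficient. The halving when $C_1 = C_2$ is inherited verbatim from the double-counting remark in Statement~3 of Theorem~\ref{thm:cp}, since each genuine pair is then tallied once as $(\bv, \widehat{\bv})$ and once as $(\widehat{\bv}, \bv)$.

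I expect the only delicate point to be the passage from Statement~1's \emph{necessary} conditions to a full \emph{if and only if}: this hinges entirely on the vacuity of the CRT congruences, which is exactly where pairwise coprimality enters in an essential way. The accompanying bookkeeping---justifying the unique shift-free representation in~(\ref{eq:copr}) so that Statement~2 applies directly, rather than the summation over shifts needed by Statement~3 in the general case---is routine but should be stated explicitly.
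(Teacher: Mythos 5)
Your proposal is correct and takes essentially the same route as the paper: Corollary~\ref{cor:1} is stated there without a separate proof, precisely as the specialization of Theorem~\ref{thm:cp} in which pairwise coprimality of the $e_i$ forces the shift-free representation~(\ref{eq:copr}) via Lemma~\ref{lem:cycle-f} and renders the CRT congruences vacuous, so the count collapses to $\prod_{i}\left|\Gamma_i(j_i,j_i')\right| = \prod_{i \in E_1 \cap E_2}(j_i-d_i,j_i'-d_i)_{t_i}$ with the halving inherited from the double-counting remark in Statement~3. Your explicit bookkeeping (justifying $c_i=\ell_i=\ell_i'=0$, and the values $|\Gamma_i|=1$ or $0$ at indices outside $E_1\cap E_2$) simply fills in details the paper leaves implicit, and it is accurate.
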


If $p_i(x)$ in Corollary~\ref{cor:1} are all primitive, then $t_i=1$ for all $i$ and
there is a conjugate pair between $C_1$ and $C_2$ in (\ref{eq:copr}) if and only if
$E_1 \cup E_2=\{1,2,\ldots,s\}$. When $C_1 \neq C_2$, the exact number of conjugate pairs is
$\displaystyle{\prod_{i\in E_1 \cap E_2} (2^{n_i}-2)}$. When $C_1 = C_2$, we halve the number.

Suppose that we add the assumption that $p_1(x)=x+1$ and keep $p_i(x)$ primitive for $i\geq 2$. If $1 \in E_1 \cap E_2$, \ie, $a_1=a_1'=1$,
then there is no conjugate pair between the corresponding pairs of cycles since $n_1=1$, making $2^{n_1}-2=0$. Keeping $p_1(x)=x+1$, we prove a result that generalizes~\cite[Prop. 10]{Chang2017}. 

\begin{proposition}\label{prop:nopair}
Consider the LFSR with an arbitrary characteristic polynomial $q(x)\in \F_2[x]$. If $(x+1) \mid q(x)$, 
then $\bv$ and $\widehat{\bv}$ never belong to the same cycle.
\end{proposition}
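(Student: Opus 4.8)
The plan is to show that divisibility by $(x+1)$ forces every sequence in $\Omega(q(x))$ to obey a parity (weight-balance) invariant that is preserved along a cycle but that is necessarily \emph{violated} between a state $\bv$ and its conjugate $\widehat\bv$. Concretely, writing $q(x)=(x+1)\,r(x)$ for some $r(x)\in\F_2[x]$, I would use the standard fact that $(x+1)\mid q(x)$ is equivalent to $q(1)=0$, which in turn means that the sum of the feedback coefficients (together with the leading coefficient) is zero. The first step is therefore to translate this algebraic condition into a statement about the feedback recursion of the LFSR.

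Next I would introduce the invariant. For any state $\bv=(v_0,v_1,\ldots,v_{n-1})$, consider the parity $\sigma(\bv):=\sum_{i=0}^{n-1} v_i \in \F_2$. The key step is to verify that the LFSR transition $T$ with characteristic polynomial $q(x)$ preserves $\sigma$, i.e. $\sigma(T\bv)=\sigma(\bv)$ for all $\bv$. This is precisely where $(x+1)\mid q(x)$ is used: the new bit appended by the feedback is $v_n=\sum_{i=0}^{n-1} c_i v_i$, and dropping $v_0$ while appending $v_n$ changes the parity by $v_0+v_n=v_0+\sum_{i=0}^{n-1}c_i v_i$. Since $q(1)=0$ gives $1+\sum_{i=0}^{n-1}c_i=0$, one checks that this increment vanishes for every state (the cleanest route is to observe that the all-ones vector, or equivalently the relation $q(1)=0$, forces the feedback to preserve total weight modulo $2$). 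Hence $\sigma$ is constant on each cycle of $\Omega(q(x))$.

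The final step is immediate once the invariant is in hand: by definition $\widehat\bv=(v_0+1,v_1,\ldots,v_{n-1})$ differs from $\bv$ in exactly one coordinate, so $\sigma(\widehat\bv)=\sigma(\bv)+1\neq\sigma(\bv)$. Since $\sigma$ is constant along any single cycle but takes different values at $\bv$ and $\widehat\bv$, the two states cannot lie on the same cycle. This proves that $\bv$ and $\widehat\bv$ always belong to distinct cycles, as claimed.

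The main obstacle I expect is the verification in the second step that $\sigma$ is genuinely preserved by $T$ for \emph{every} state, not merely for the generating sequences; one must be careful that the argument depends only on $q(1)=0$ and not on any primitivity or squarefreeness of $q(x)$, so that the result holds for an \emph{arbitrary} $q(x)$ divisible by $(x+1)$ (including the case of repeated roots). A clean way to sidestep any delicate case analysis is to phrase the invariant intrinsically: $\sigma(\bv)$ equals the evaluation at $1$ of the state's associated polynomial, and the shift operator acts as multiplication by $x$ modulo $q(x)$, so $(x+1)\mid q(x)$ makes evaluation-at-$1$ a well-defined, shift-invariant $\F_2$-linear functional on $\Omega(q(x))$. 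Establishing this cleanly is the crux; the rest is a one-line parity count.
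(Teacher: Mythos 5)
Your overall strategy---find a shift-invariant $\F_2$-linear functional that separates $\bv$ from $\widehat{\bv}$---is sound and genuinely different from the paper's argument, but the concrete invariant you propose is wrong, and this is a real gap, not a formality. The coordinate parity $\sigma(\bv)=\sum_{i=0}^{n-1}v_i$ is \emph{not} preserved by $T$ under the hypothesis $q(1)=0$. The per-step increment is $v_0+v_n=(1+c_0)v_0+\sum_{i=1}^{n-1}c_iv_i$, and the condition $1+\sum_{i=0}^{n-1}c_i=0$ makes this vanish at the all-ones state only; it vanishes identically if and only if $c_0=1$ and $c_1=\cdots=c_{n-1}=0$, i.e.\ only for $q(x)=x^n+1$. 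Concretely, take $q(x)=x^4+x^3+x^2+1=(x+1)(x^3+x+1)$, so $v_4=v_0+v_2+v_3$ and the increment is $v_2+v_3$: the state $(0,0,1,0)$ maps to $(0,1,0,1)$, with parity $1\mapsto 0$. Your fallback formulation has the same flaw: under the isomorphism $\Omega(q(x))\cong\F_2[x]/(q(x))$ that intertwines $L$ with multiplication by $x$, evaluation-at-$1$ is indeed a well-defined shift-invariant functional (precisely because $q(1)=0$), but in state coordinates it is \emph{not} $\sum_i v_i$; in the example above the unique (up to scalar) $T$-invariant functional is $v_0+v_1+v_3$.

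Even after replacing $\sigma$ by the correct functional $\lambda$, a second step is missing: you must show $\lambda(\bS)\neq 0$, i.e.\ that $\bS=(1,\0)$ does not lie in $\mathrm{im}(T+I)$, since otherwise $\lambda$ fails to distinguish $\bv$ from $\widehat{\bv}=\bv+\bS$. This is exactly where the paper's key observation enters: $\mathrm{im}(T+I)$ consists of the states of sequences in $(L+I)\,\Omega(q(x))=\Omega\bigl(q(x)/(x+1)\bigr)$, all of whose minimal polynomials have degree $<n$, whereas any sequence containing $\bS$ has minimal polynomial exactly $q(x)$ (its $n-1$ consecutive zeros would otherwise force the zero sequence). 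With both repairs your argument becomes a correct, and in fact quite clean, alternative proof valid for arbitrary $q(x)$ with $(x+1)\mid q(x)$, including repeated factors. The paper proceeds differently: it assumes $\widehat{\bv}=T^k\bv$, decomposes $\s+L^k\s$ into primary components, and uses the fact (citing Lemma 4.1 of Kurosawa et al.) that the $(x+1)^{a_1}$-component $\s_1+L^k\s_1$ suffers a strict drop in the degree of its minimal polynomial, contradicting that the sequence through $\bS$ requires full degree $n$. Both proofs ultimately pivot on the same fact about $\bS$; your route trades the linear-complexity lemma for elementary linear algebra, but as submitted the invariance claim is false and the nonvanishing at $\bS$ is unaddressed.
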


\begin{proof}
Let $p_1(x):=(x+1)$ and $q(x):= \prod_{i=1}^{s} p_i^{a_i}(x)$ with $a_i \in \N$. If $[\s] \in \Omega(q(x))$ contains a conjugate pair, then $q(x)$ must be its minimal polynomial. Otherwise, let the minimal polynomial be $p(x)$ with $\deg(p(x)) < \deg(q(x))$. Then there is $k \in \Z$ such that $\s + L^{k} \s$ contains $\bS$ as a state and has characteristic polynomial $p(x)$. The sequence containing $\bS$ as a state, however, must have $q(x)$
as its minimal polynomial, which is a contradiction. Hence, $[\s]$ must have the form $\left[\s_1+ L^{i_2}\s_2 +\ldots+ L^{i_s}\s_s \right]$ for $i_2,\ldots,i_{s} \in \Z$ with $p_i^{a_i}(x)$ being the minimal polynomial of $\s_i$ for all $i$. Thus, $\s + L^{k} \s$ has the form
$L^{i'_1} \s'_1+ L^{i'_2} \s'_s + \ldots + L^{i'_s} \s'_s$ with $p_i^{a_i}(x)$ being the characteristic polynomial of $\s'_i$.

In particular, $L^{i'_1} \s'_1=\s_1+L^{k}\s_1$ must be the sum of two sequences having the same minimal polynomial $(x+1)^{a_1}$. Its period is a power of $2$. By~\cite[Lem. 4.1]{KSSK00}, the degree of the minimal polynomial of $L^{i'_1} \s'_1$ is $< a_1$. Hence, the degree of the minimal polynomial of the resulting sequence $\s + L^{k} \s$ must be $< \deg(q(x))$. Thus, it cannot contain $\bS$. \qed
\end{proof}

In joining the cycles, the conjugate pairs
between any cycle and itself are never used. To take advantage of Proposition~\ref{prop:nopair}, let $p_1(x)=x+1$. This implies $\s^1=\1$.
Theorem~\ref{thm:cp} can still be used to determine the conjugate pairs between any two distinct cycles. Let $\s_0$ be as in (\ref{eq:cons}), $\s_1$ and $\s_2$ in (\ref{equ:cc}),
and $\s^1_{d_1}=\s^1_{j_1}=\s^1_{j_1'}=\1$. If $[\s]$ is in $\Omega(f(x))$, then clearly so is $[\s+\1]$.

\begin{corollary}\label{cor:cp}
Let $C_1=[\s_1]$ and $C_2=[\s_2]$ be two cycles in $\Omega(f(x))$ with $p_1(x)=x+1$.
\begin{enumerate}
\item For $C_1$ and $C_2$ to be adjacent, $a_1+a_1'=1$.
\item If $(\bv,\widehat{\bv})$ is a conjugate pair between $C_1=[\s_1]$ and $C_2=[\s_2]$,
then $(\bv+\1,\widehat{\bv}+\1)$ is a conjugate pair between $[\s_1+\1]$ and $[\s_2+\1]$.
\end{enumerate}
\end{corollary}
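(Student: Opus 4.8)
The plan is to treat the two statements separately, since Statement~1 is a direct application of the necessary conditions from Theorem~\ref{thm:cp} to the special case $p_1(x)=x+1$, while Statement~2 is a symmetry argument exploiting the fact that $\1$ is itself a sequence in $\Omega(f(x))$. For Statement~1, I would start from the observation (made in the paragraph preceding Proposition~\ref{prop:nopair}) that when $p_1(x)=x+1$ is primitive we have $n_1=1$ and hence $t_1=1$, so the only nonzero cycle attributable to the first factor is $[\1]$. The key point is that $2^{n_1}-2=0$, which means that if both $a_1=1$ and $a_1'=1$, i.e. if $1\in E_1\cap E_2$, then the factor $(j_1-d_1,\,j_1'-d_1)_{t_1}$ appearing in the product of Theorem~\ref{thm:cp}(3) vanishes, forcing zero conjugate pairs. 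Combined with the necessary condition $E_1\cup E_2=\{1,2,\ldots,s\}$ from Theorem~\ref{thm:cp}(1a), which rules out $a_1=a_1'=0$, the only way for $C_1$ and $C_2$ to be adjacent is to have exactly one of $a_1,a_1'$ equal to $1$; that is precisely $a_1+a_1'=1$.

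For Statement~2, I would argue directly from the definition of a conjugate pair together with the linearity of the shift operator and the ambient set $\Omega(f(x))$. Since $p_1(x)=x+1$ divides $f(x)$, the all-ones sequence $\1$ lies in $\Omega(f(x))$, and the map $\s\mapsto\s+\1$ is a bijection on $\Omega(f(x))$ carrying $[\s]$ to $[\s+\1]$ (as already noted just before the corollary). The crucial computation is to check what adding $\1$ does to a state and to its conjugate: if $\bv=(v_0,v_1,\ldots,v_{n-1})$, then adding the length-$n$ all-ones vector flips every coordinate, and in particular flips the leading coordinate $v_0$ in both $\bv$ and $\widehat{\bv}$ simultaneously. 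Hence $\widehat{\bv+\1}=(v_0+1+1,v_1+1,\ldots,v_{n-1}+1)=\widehat{\bv}+\1$, so conjugation commutes with the translation by $\1$.

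I would then assemble the pieces: if $(\bv,\widehat{\bv})$ is a conjugate pair between $C_1=[\s_1]$ and $C_2=[\s_2]$, meaning $\bv$ is a state of $\s_1$ and $\widehat{\bv}$ is a state of $\s_2$, then $\bv+\1$ is a state of $\s_1+\1$ and $\widehat{\bv}+\1=\widehat{\bv+\1}$ is a state of $\s_2+\1$. Therefore $(\bv+\1,\widehat{\bv}+\1)$ is a conjugate pair between $[\s_1+\1]$ and $[\s_2+\1]$, which is exactly the claim.

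I do not expect any serious obstacle here; both statements follow quickly from machinery already in place. The only point requiring mild care is the bookkeeping in Statement~2, namely verifying that the translation-by-$\1$ map sends states to states compatibly with conjugation. The subtlety is that $\1$ must be read as a length-$n$ window when acting on states but as a full-period sequence when acting on $\Omega(f(x))$; making sure these two views agree (they do, because a state of $\s+\1$ is the corresponding window of $\s$ with all $n$ bits complemented) is the one place where I would write the coordinatewise check explicitly rather than wave hands.
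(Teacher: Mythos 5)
Your proposal is correct and takes essentially the same approach the paper intends: the corollary is stated there without a separate proof, as an immediate consequence of Theorem~\ref{thm:cp} (where $n_1=1$ makes the cyclotomic factor $(0,0)_{t_1}=2^{n_1}-2=0$ whenever $1\in E_1\cap E_2$, while condition 1(a) rules out $a_1=a_1'=0$) together with the preceding observation that $[\s]\in\Omega(f(x))$ implies $[\s+\1]\in\Omega(f(x))$. Your write-up merely supplies the details the paper leaves implicit, including the coordinatewise check that $\widehat{\bv+\1}=\widehat{\bv}+\1$, and both steps are sound.
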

These facts simplify the determination of the conjugate pairs. This was exhibited in~\cite{Li16} for $f(x)=(x+1) \prod_{j=2}^{s} p_j(x)$ where the $p_j (x)$s are primitive polynomials with pairwise coprime periods.
Corollary \ref{cor:cp} tells us that the same applies to a larger class of polynomials. 

\section{Two Preparatory Algorithms}\label{sec:PrepAlg}

This section discusses two auxiliary algorithms to prepare for the main algorithm. Any conjugate pair can be written as $(\bv,\bv+\bS)$. We have already constructed $\P$ and found a state of each cycle in $\Omega(p_i(x))$ for all $1 \leq i \leq s$ by Algorithm \ref{algo:state}. We now use Algorithm \ref{algo:S} to find the representation
\begin{equation}\label{rep:S}
\bS=\left(T^{c_1}\a^1_{d_1},T^{c_2}\a^2_{d_2},\ldots,T^{c_s}\a^s_{d_s}\right) \P.
\end{equation}

\begin{algorithm}[h!]
\caption{Representing  the Special State $\bS$}
\label{algo:S}
\begin{algorithmic}[1]
\renewcommand{\algorithmicrequire}{\textbf{Input:}}
\renewcommand{\algorithmicensure}{\textbf{Output:}}
	\Require $\P$, $\bS=(1,0,\ldots,0)\in\F_2^n$.
	\Ensure $((c_1,d_1),\ldots,(c_s,d_s))$ such that (\ref{rep:S}) holds.
	\State $(\bv_1,\ldots,\bv_s)\gets \bS \P^{-1}$
	\For {$i$ from $1$ to $s$}
	\For {$j$ from 0 to $t_i-1$}
	\For {$k$ from 0 to $e_i-1$}	
	\If {$\bv_i=\a^i_j$}
	\State {store $(c_i,d_i)\gets(k,j)$}
	\Else
	\State {$\a^i_j \gets T \a^i_j$}
	\EndIf
	\EndFor
	\EndFor
	\EndFor
	\State{return $((c_1,d_1),\ldots,(c_s,d_s))$}
\end{algorithmic}
\end{algorithm}

The algorithm is straightforward, using $\bS \P^{-1} =(\bv_1,\ldots,\bv_s)$
to explicitly find a state $\bv_i$ for each $i$ belonging to a sequence in $\Omega(p_i(x))$. The tuple $(c_i,d_i)$ satisfying $\bv_i=T^{c_i}\a^i_{d_i}$ is found after at most $\sum_{i=1}^{s}(2^{n_i}-1)$ searches. A comparable algorithm accomplishing the same task was presented as \cite[Alg.~1]{Li16}. The latter needs at most $\prod_{i=1}^{s}(2^{n_i}-1)$ searches to find the required representation of $\bS$. Algorithm~\ref{algo:S} is clearly more efficient.

\begin{algorithm}[h!]
\caption{Determining ``Conjugate Pairs" between $2$ Nonzero Cycles in $\Omega(p_i(x))$}
\label{algo:cp-irr}
\begin{algorithmic}[1]
\renewcommand{\algorithmicrequire}{\textbf{Input:}}
\renewcommand{\algorithmicensure}{\textbf{Output:}}
	\Require $\a^i_0,\ldots,\a^i_{t_i-1}$, $T^{c_i}\a^i_{d_i}$, $e_i$.
	\Ensure $\{(\ell_i,-m_i)\}$ satisfying $T^{\ell_i} \a_{j}^{i}+T^{-m_i} \a_{k}^{i}=T^{c_i}\a^i_{d_i}$.
	\State {$K_i \gets \varnothing$} \Comment{initiate the list of defining pairs}
	\For {$j$ from $0$ to $t_i-2$}
	\For {$k$ from $j+1$ to $t_i-1$}
	\For {$(\ell_i, m_i) \in \{0,1,\ldots,e_i-1\} \times \{0,1,\ldots,e_i-1\}$} 
	\If {$T^{\ell_i} \a_{j}^{i}+T^{-m_i} \a_{k}^{i}=T^{c_i}\a^i_{d_i}$}
	\State {append $(\ell_i,-m_i)$ to $K_i$}
	\EndIf
	\EndFor
	\EndFor
	\EndFor
	\State{return $K_i$ and $\lambda_i \triangleq |K_i|$}
\end{algorithmic}
\end{algorithm}

Next comes a crucial step of dividing the problem of determining ``global'' conjugate pairs between any two cycles in $\Omega(f(x))$ by first listing all possible ``local'' candidates between any two cycles in $\Omega(p_i(x))$ for all $i$. Consider $\Omega(p_i(x))$ for a fixed $i$ and choose any two cycles in it. Algorithm \ref{algo:cp-irr} finds all possible pairs $(\bw_1^i,\bw_2^i)$ satisfying $\bw_1^i+\bw_2^i=T^{c_i}\a^i_{d_i}$. Given the representation of $\bS$, running this algorithm for all $\Omega(p_i(x))$ yields the required pairs of states in $\Omega(p_i(x))$. The sum of the two states is now the corresponding state in $\bS \P^{-1}$. The input consists of $t_i$ states, each corresponding to one of the $t_i$ nonzero cycles in $\Omega(p_i(x))$. The output is the set $K_i$ of all tuples $(\ell_i,-m_i)$ that ensure $T^{\ell_i} \a_{j}^{i}+T^{-m_i} \a_{k}^{i}=T^{c_i}\a^i_{d_i}$, \ie, the corresponding state pairs sum to $T^{c_i}\a^i_{d_i}$. The specific choice of a state belonging to a cycle affects neither the number of conjugate pairs nor the states being paired in each conjugate pair. 

For a chosen pair $C_1$ and $C_2$, if Algorithm \ref{algo:cp-irr} yields a defining pair $(x_1,x_2)$, then the defining pair for a ``conjugate pair" between $C_2$ and $C_1$ will be $(x_2,x_1)$. Hence, for each $p_i(x)$, it suffices to take $\binom{t_i}{2}$, instead of $t_i^{2}$, distinct $(j,k)$ tuples. The total number $\lambda_i$ of suitable $(\ell_i,-m_i)$ tuples is a cyclotomic number with specific parameters. 

Running the algorithm is unnecessary for $C_1=[\0]$ and $C_2$ any nonzero cycle. The pair of cycles $(\0,\a^i_j)$ shares a unique conjugate pair if and only if $\a^i_j=\a^i_{d_i}$, \ie, $K_i= \{(0,c_i)\}$. Similarly, if $p_i(x)$ is primitive, one does not need the algorithm. The desired results can be computed using the Zech logarithm $\tau_{n}(\ell)$.
If $\a$ is an $n$-stage state of an $m$-sequence, then the shift-and-add property in~\cite[Thm. 5.3]{GG05} says that $\a+T^{\ell}\a=T^{\tau_n(\ell)}\a$ when $\ell \neq 0$.
If, in $(\bv_1,\ldots,\bv_s)=\bS \P^{-1}$, $\bv_i=T^c\a$, then an output $(y,y')$
of Algorithm \ref{algo:cp-irr} for $p_i(x)$ satisfies $T^y\a+T^{y'}\a=T^c\a$. Hence,
$T^{y'}\a=T^c\a+T^y\a=T^c(\a+T^{y-c}\a)=T^{c+\tau_n(y-c)}\a $. Thus, the desired output must be $\{(y,c+\tau_n(y-c))~|~ y \in \{0,1,\ldots,2^n-2\} \setminus \{c\} \}$
and knowing $\tau_{n}(y)$ is sufficient to find the tuples.

\section{The Main Algorithm}\label{sec:MainAlg}

The ingredients to construct the adjacency graph $G$ associated with $f(x)$ is now ready. Algorithm \ref{algo:all} uses the results of Algorithm \ref{algo:cp-irr} to determine all conjugate pairs between any pair of distinct nonzero cycles. Let us assume that $C_1=[\s_1] \neq C_2=[\s_2]$ in $\Omega(f(x))$ are given in the form specified by (\ref{equ:cc}) with the states $\bv_1$ of $C_1$ and $\bv_2$ of $C_2$ written as
\begin{equation}\label{state:v12}
\bv_1 =\left(a_1T^{\ell_1}\a_{j_1}^{1}, \ldots, a_sT^{\ell_{s}}\a_{j_s}^{s}\right)\P \mbox{ and }
\bv_2 =\left(a'_1T^{\ell_1'}\a_{j_1'}^{1}, \ldots, a_s'T^{\ell_{s}'}\a_{j_s'}^{s}\right)\P.
\end{equation}

\begin{algorithm}[h!]
\caption{All Conjugate Pairs between $2$ Nonzero Cycles in $\Omega(f(x))$}
\label{algo:all}
\begin{algorithmic}[1]
\renewcommand{\algorithmicrequire}{\textbf{Input:}}
\renewcommand{\algorithmicensure}{\textbf{Output:}}
\Require  $\bv_1, \bv_2$ defined in (\ref{state:v12}) and $K_1,K_2,\ldots,K_s$ from Algorithm~\ref{algo:cp-irr}.
\Ensure All conjugate pairs between distinct nonzero cycles $C_1$ and $C_2$.
\For{$i$ from $1$ to $s$}
	\If{$\lambda_i=0$}
	\State{return: there is no conjugate pair; break}
	\EndIf
\EndFor
\State{$CP \gets \varnothing$} \Comment{initiate the set of conjugate pairs}
\For{each $((u_{k_1},v_{k_1}),\ldots,(u_{k_s},v_{k_s}))$ in $K\triangleq K_1 \times K_2 \times \ldots \times K_s$} \Comment{$1 \leq k_i \leq \lambda_i$ for $1 \leq i \leq s$}
\If{
	$\begin{cases}
	u_{k_i}-\ell_i \equiv u_{k_m}-\ell_m \\
	v_{k_i}-\ell_i' \equiv v_{k_m}-\ell_m'
	\end{cases}$ holds in modulo $\gcd(f_i,f_m)$ for all $1 \leq m < i \leq s$}
	\State{$\bv \gets (a_1T^{u_{k_1}}\a_{j_1}^{1}, \ldots, a_sT^{u_{k_s}}\a_{j_s}^{s})\P$}
	\State{append the conjugate pair $(\bv,\widehat{\bv})$ to $CP$}
\EndIf
\EndFor
\State{return $CP$}
\end{algorithmic}
\end{algorithm}

\begin{theorem}\label{th:alg}
Algorithm~\ref{algo:all} is correct.
\end{theorem}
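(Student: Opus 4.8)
The plan is to verify that Algorithm~\ref{algo:all} enumerates exactly the set of conjugate pairs characterized by Theorem~\ref{thm:cp}, Statement 2, neither missing any valid pair nor producing spurious ones. First I would recall that by Theorem~\ref{thm:cp} the conjugate pairs shared by $C_1$ and $C_2$ are in bijection with tuples $((u_1,v_1),\ldots,(u_s,v_s))$ subject to two conditions: each $(u_i,v_i) \in \Gamma_i(j_i,j_i')$, and the congruence systems hold modulo $\gcd(e_i,e_j)$. The key observation is that the sets $K_i$ returned by Algorithm~\ref{algo:cp-irr} are precisely the tuples $(u_i,v_i)$ realizing $T^{u_i}\a^i_{j_i}+T^{v_i}\a^i_{j_i'}=T^{c_i}\a^i_{d_i}$, which by Equation~(\ref{ss:tuple-p_i}) and the translation between $T$ on states and $L$ on sequences is exactly the condition $(u_i,v_i)\in\Gamma_i(j_i,j_i')$ after the shift by $c_i$ is absorbed. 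Thus iterating over $K = K_1\times\cdots\times K_s$ in the main loop enumerates precisely all candidate tuples satisfying condition 2(a) of Theorem~\ref{thm:cp}.

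The second step is to match the algorithm's congruence test against condition 2(b). Here I would point out that the test in the \textbf{If} branch, namely $u_{k_i}-\ell_i\equiv u_{k_m}-\ell_m$ and $v_{k_i}-\ell_i'\equiv v_{k_m}-\ell_m'$ modulo $\gcd(f_i,f_m)$ for all $m<i$, is the same system appearing in Theorem~\ref{thm:cp}, once we account for the constant shifts $c_i$. Since each congruence in the theorem is of the form $c_i+u_i-\ell_i\equiv c_j+u_j-\ell_j$, and the $c_i$ terms have already been folded into the definition of $K_i$ (the target state is $T^{c_i}\a^i_{d_i}$ rather than $\a^i_{d_i}$), the residual test the algorithm performs is exactly the condition guaranteeing, via the generalized CRT (Theorem~\ref{CRT}), that compatible global shifts $\ell,\ell'$ exist. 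I would also note that $\gcd(f_i,f_m)=\gcd(e_i,e_m)$ precisely when both $a_i=a_m=1$, and that the reduction to $E_1,E_2$ (where indices with $a_i=0$ force $u_i=v_i=0$) makes the test over the full index range equivalent to the test over $E_1$ and $E_2$ separately.

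The third step handles the early-exit behavior and the output. When some $\lambda_i=0$, the set $K_i$ is empty, so $K$ is empty and no tuple can satisfy condition 2(a); this is exactly Statement 1(c) of Theorem~\ref{thm:cp} failing, so returning ``no conjugate pair'' is correct. For each surviving tuple the algorithm reconstructs $\bv=(a_1T^{u_{k_1}}\a^1_{j_1},\ldots,a_sT^{u_{k_s}}\a^s_{j_s})\P$ and records $(\bv,\widehat\bv)$; I would verify that this $\bv$ is genuinely a state on $C_1$ whose conjugate lies on $C_2$, using the bijection between $\bv$ and its component representation and the relation $\widehat\bv+\bv=\bS$ established at the start of Section~\ref{sec:graph}. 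Distinctness of the recorded pairs follows because distinct tuples in $K$ yield distinct component representations, hence distinct states $\bv$ under the bijection induced by $\P$.

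I expect the main obstacle to be the bookkeeping of the shift constants $c_i$: one must argue carefully that absorbing $c_i$ into the target of Algorithm~\ref{algo:cp-irr} renders the algorithm's bare congruence test (which omits explicit $c_i$ terms) equivalent to the $c_i$-laden congruences of Theorem~\ref{thm:cp}. The subtlety is that the $c_i$ shifts appear identically on both sides of each pairwise congruence only after the $K_i$ encode them uniformly, so I would make explicit the substitution $u_i \mapsto u_i$ (with the understanding that the stored $u_{k_i}$ already satisfies the shifted target equation) to confirm the cancellation. Once this identification is in place, correctness reduces to the already-proved Theorem~\ref{thm:cp}, and the remaining verification is routine.
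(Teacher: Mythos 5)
Your proposal is correct and takes essentially the same route as the paper's own proof: both decompose a putative conjugate pair into components in the $\Omega(p_i(x))$, observe that the targets $T^{c_i}\a^i_{d_i}$ in Algorithm~\ref{algo:cp-irr} absorb the shifts $c_i$ so that membership in $K_i$ encodes condition 2(a) of Theorem~\ref{thm:cp}, and invoke the generalized CRT (Theorem~\ref{CRT}) to show the pairwise congruence test modulo $\gcd(f_i,f_m)$ is exactly the solvability criterion for the global shifts $\ell,\ell'$, after which $\bv=(a_1T^{u_{k_1}}\a^1_{j_1},\ldots,a_sT^{u_{k_s}}\a^s_{j_s})\P$ is confirmed as a valid state. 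The only difference is presentational --- you frame the argument as a reduction to Theorem~\ref{thm:cp} while the paper re-derives the congruence systems directly, but the underlying content is identical.
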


\begin{proof}
If $C_1$ and $C_2$ are adjacent, then there are integers $\ell$ and $\ell'$ satisfying $T^{\ell}\bv_1+T^{\ell'}\bv_2=\bS$. In particular, for each $i$, it holds that $a_iT^{\ell+\ell_i}\a_{j_i}^{i}+a'_iT^{\ell'+\ell_i'}\a_{j_i'}^{i}=T^{c_i}\a^i_{d_i}$. Hence, $(\ell+\ell_i,\ell'+\ell_i')$ must be a pair $(u_{k_i},v_{k_i})$ obtained from Algorithm \ref{algo:cp-irr}, \ie, $\ell+\ell_i\equiv u_{k_i}\Mod{f_i}\mbox{ and } \ell'+\ell_i'\equiv v_{k_i}\Mod{f_i}$, implying
\[
\begin{cases}
\ell \equiv u_{k_i}-\ell_i \Mod{e_i}  \mbox{ for all } i \in E_1\\
\ell'\equiv v_{k_i}-\ell_i' \Mod{e_i} \mbox{ for all } i \in E_2
\end{cases}.
\]
We know that $\ell$ and $\ell'$ exist if and only if the systems of congruences
\[
\begin{cases}
\ell  \equiv u_{k_i}-\ell_i \Mod{f_i} \mbox{ for all } 1 \leq i \leq s\\
\ell' \equiv v_{k_i}-\ell_i'\Mod{f_i} \mbox{ for all } 1 \leq i \leq s
\end{cases}
\]
can be simultaneously solved. For this to happen, Theorem~\ref{CRT} requires that the systems
\[
\begin{cases}
u_{k_i}-\ell_i  \equiv u_{k_j}-\ell_j\Mod{\gcd(f_i,f_j)}\\
v_{k_i}-\ell_i' \equiv v_{k_j}-\ell_j'\Mod{\gcd(f_i,f_j)}
\end{cases}
\]
hold for $1 \leq i \neq j \leq s$.
	
Algorithm \ref{algo:all} checks whether this requirement is met. Given $2 \leq i \leq s$, the verification is performed for all $1 \leq m < i$. All relevant congruences are certified to hold simultaneously. The process is terminated at the first instance when one of the congruences fails to hold. For a chosen set $\{(u_{k_i},v_{k_i})\}_{i=1}^{s}$, once the systems of congruences are certified to hold, then $\ell$ and $\ell'$ exist, making $(T^{\ell}\bv_1,T^{\ell'}\bv_2)$ a conjugate pair. 
Since $\ell+\ell_i\equiv u_{k_i}\Mod{f_i}$, one confirms that $(\bv,\widehat{\bv})$ is a conjugate pair with $\bv=(a_1T^{u_{k_1}}\a_{j_1}^{1}, a_2T^{u_{k_2}}\a_{j_2}^{2}, \ldots, a_sT^{u_{k_s}}\a_{j_s}^{s})\P$. \qed
\end{proof}

In~\cite{Li16}, the $p_i(x)$s are primitive and $e_1,\ldots,e_s$ are pairwise coprime. The conjugate pairs can be found without Algorithm \ref{algo:all}.
Let $C_1=[a_1 \s_1+\ldots+a_s \s_s]$ and $C_2=[b_1 \s_1+ \ldots+ b_s \s_s]$ with $a_i \vee b_i=1$.
Given $\bS \P^{-1}=(\a_1,\ldots,\a_s)$, it was shown that $\bv=(\bv_1,\ldots,\bv_s)\P$ in the conjugate pair
$(\bv,\widehat{\bv})$ has
\[
\bv_i=
\begin{cases}
\0 &\mbox{ if } a_i=0 \mbox{ and } b_i=1,\\
\a_i &\mbox{ if } a_i=1 \mbox{ and } b_i=0,\\
\bv_i \in \F_2^{n_i} \setminus\{\0,\a_i\} &\mbox{ if } a_i=b_i=1.
\end{cases}
\]
If the periods are not coprime, then the situation is more involved.
One must ensure that the required systems of congruences hold simultaneously.
Algorithm~\ref{algo:all} covers this more general situation. The overall running time can be further improved by using the properties in Section~\ref{sec:graph} to rule out pairs of cycles with no conjugate pairs prior to running Algorithm~\ref{algo:all}.

Since Algorithm~\ref{algo:all} builds upon the results of Algorithm \ref{algo:cp-irr}, sufficient storage must be allocated to have them ready at hand. The two algorithms can be merged. In Algorithm~\ref{algo:all}, at the precise step when the ``conjugate pairs'' between the specified two cycles in $\Omega(p_i(x))$ are needed, a search for them can be executed to cut down on the storage requirement. Anticipating the need to find all conjugate pairs between any two cycles in $\Omega(f(x))$ in various applications and noting that the data are reasonably small, we prefer storing them.

Algorithm~\ref{algo:all} transforms the problem of finding conjugate pairs between two cycles with a more complicated characteristic polynomial into one of finding ``conjugate pairs'' between two cycles with simpler characteristic polynomials. Once we have determined the ``conjugate pairs'' for each $p_i(x)$, solving systems of congruences leads us to the desired conjugate pairs for $f(x)$. In contrast, for two cycles with minimal polynomial $f(x)$, exhaustive search requires around $(\lcm(e_1,\ldots,e_s))^2$ computations. 

Let $\psi$ denote the number of cycles in $\Omega(f(x))$. The number of pairs of nonzero cycles in Algorithm~\ref{algo:all} is $\displaystyle{\binom{\psi-1}{2}}$. For each of them, the expected cardinality of $K$ is $\displaystyle{\prod_{i=1}^s \lambda_i = \prod_{i=1}^s \left \lceil \frac{e_i}{t_i} \right \rceil }$. For any element of $K$, performing the CRT takes $\mathcal{O}(\log^2(F))$ where $F:=\prod_{i=1}^s f_i$. Here we use the analysis on the complexity of CRT based on Gardner's Algorithm \cite[Sect.~4.3.2]{Knuth2}. A similar analysis can also be found in~\cite[Ch.~3]{Ding96}.

\section{The Case of Repeated Roots}\label{sec:reproots}

This section briefly considers characteristic polynomials with repeated roots
\[
q(x):=\prod_{i=1}^s p_i^{b_i}(x) \mbox{ with } b_i > 1 \mbox{ for some } i.
\]
We have recently determined the cycle structure of $\Omega(q(x))$ in~\cite[Thm.~ 1]{CELW16}. Hence, only a brief outline of the idea is provided here. The number of cycles in $\Omega(p_i^{b_i}(x))$ can be derived from~\cite[Thm. 8.63]{LN97}.
\begin{enumerate}
\item The only cycle with period $1$ is $[\0]$.

\item There are $t_{i}$ cycles containing sequences with period $e_i$.

\item Let $\chi_{i}$ be the smallest positive integer such that $2^{\chi_{i}} \geq b_i$ and let $ 1 \leq j < \chi_{i}$. Sequences with period $e_i \cdot 2^{j}$ are partitioned into $\rho$ cycles whenever $b_i > 2$ while those with period $e_i \cdot 2^{\chi_{i}}$ whenever $b_i \geq 2$ are partitioned into $\zeta$ cycles where
\[
\rho:= \frac{2^{n_i \cdot 2^{j}}-2^{n_i \cdot 2^{j-1}}}{e_i \cdot 2^{j}} \mbox{ and }
\zeta:=\frac{2^{n_i \cdot b_i}-2^{n_i \cdot 2^{\chi_{i}-1}}}{e_i \cdot 2^{\chi_{i}}}.
\]
\end{enumerate}

Based on the states of the cycles in $\Omega(p_i(x))$ we give a detailed procedure to derive 
the states of the remaining cycles in $\Omega(p_i^{b_i}(x))$. For brevity, all states are 
considered to have the same length $b_i \cdot n_i$. Using the cycles in $\Omega(p_i^{b_i}(x))$ 
for all $i$, we can determine all cycles in $\Omega(q(x))$ and prove results similar to the 
ones in Lemma~\ref{lem:cycle-f}.
Once the cycle structure of $\Omega(q(x))$ is known, one can use the methods discussed above to study the properties of the conjugate pairs.

Copying the construction of $\P$, we build $\widetilde{\P}$ by replacing the original polynomial $p_i(x)$ by $p_i^{b_i}(x)$. We know from \cite[Lemma 7]{CELW16} that $\widetilde{\P}$ has full rank and can use it to determine
the new representation of $\bS$ as $(\bv_1,\ldots,\bv_s) \widetilde{\P}$. Here, $\bv_i \in \F_2^{b_i \cdot n_i}$ is a state
of a cycle in $\Omega(p_i^{b_i}(x))$ with minimal polynomial $p_i^{b_i}(x)$. Given any two cycles, we deploy Algorithm \ref{algo:cp-irr} with inputs the states of all cycles in $\Omega(p_i^{b_i}(x))$ to output pairs of states $(\x_i,\y_i)$ satisfying $\x_i+\y_i=\bv_i$.
Now, this is where complication arises
since we are no longer able to leverage on tools or results from relevant cyclotomic numbers, plus there are likely to be a lot of
distinct cycles in $\Omega(p_i^{b_i}(x))$.

The proof of the following result on the pair $(\x_i,\y_i)$ is straightforward. Using the cycle structure of $\Omega(p_i^{b_i}(x))$ and some properties of LFSRs, it may be possible to derive more results on the pair $(\x_i,\y_i)$. We leave them for future investigation.

\begin{proposition}\label{rep-root1}
Let $\x_i$ and $\y_i$ be states belonging to respective cycles $C_1=[\s_1]$ and $C_2=[\s_2]$ in $\Omega(p_i^{b_i}(x))$. If there exist $a,b \in \Z$ satisfying $T^a {\x}_i + T^b{\y}_i = \bv_i$, then at least one of $\s_1$ or $\s_2$ must have $p_i^{b_i}(x)$ as minimal polynomial. If $p_1(x)= 1+x $, then exactly one of $\s_1$ and $\s_2$ has minimal polynomial $(1+x)^{b_i}$.
\end{proposition}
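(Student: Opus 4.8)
The plan is to argue by contradiction on the claim that neither $\s_1$ nor $\s_2$ has $p_i^{b_i}(x)$ as its minimal polynomial. First I would recall that every sequence in $\Omega(p_i^{b_i}(x))$ has a minimal polynomial of the form $p_i^{c}(x)$ for some $0 \leq c \leq b_i$, since $p_i(x)$ is irreducible and so the only divisors of $p_i^{b_i}(x)$ are its powers. Writing $m_1(x)$ and $m_2(x)$ for the minimal polynomials of $\s_1$ and $\s_2$, the working assumption is that both have degree strictly below $b_i \cdot n_i$, i.e. $m_1(x)=p_i^{c_1}(x)$ and $m_2(x)=p_i^{c_2}(x)$ with $c_1,c_2 < b_i$.

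Next I would use the hypothesis $T^a \x_i + T^b \y_i = \bv_i$. Translating from states back to sequences, this says that a shift of $\s_1$ plus a shift of $\s_2$ equals a sequence $\bw$ whose state is $\bv_i$; by construction $\bv_i$ lies in a cycle whose minimal polynomial is exactly $p_i^{b_i}(x)$, so $\bw$ has minimal polynomial $p_i^{b_i}(x)$. The key step is the standard fact that the minimal polynomial of a sum of two sequences divides the least common multiple of their individual minimal polynomials. Here $\lcm(m_1(x),m_2(x)) = p_i^{\max(c_1,c_2)}(x)$, whose degree is $\max(c_1,c_2)\cdot n_i < b_i \cdot n_i$. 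Hence the minimal polynomial of $T^a\s_1 + T^b\s_2 = \bw$ would have degree strictly less than $b_i \cdot n_i$, contradicting that $\bw$ has minimal polynomial $p_i^{b_i}(x)$. This forces $\max(c_1,c_2)=b_i$, i.e. at least one of $\s_1,\s_2$ has minimal polynomial $p_i^{b_i}(x)$, which is the first assertion.

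For the second assertion, assume $p_1(x)=1+x$, so $i=1$ and we are working with powers of $1+x$. The first part already guarantees that \emph{at least} one of $\s_1,\s_2$ has minimal polynomial $(1+x)^{b_1}$; what remains is to rule out that \emph{both} do. Here I would invoke the same divisibility/degree bound in the opposite direction, combined with~\cite[Lem. 4.1]{KSSK00} as used in the proof of Proposition~\ref{prop:nopair}: if both $\s_1$ and $\s_2$ had minimal polynomial $(1+x)^{b_1}$, then their shifted sum $T^a\s_1 + T^b\s_2$ would be a sum of two sequences sharing the minimal polynomial $(1+x)^{b_1}$, and that lemma forces the degree of the minimal polynomial of the sum to drop strictly below $b_1$. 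This again contradicts $\bw$ having minimal polynomial $(1+x)^{b_1}$, so exactly one of $\s_1,\s_2$ attains $(1+x)^{b_1}$.

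The main obstacle I anticipate is the careful bookkeeping around the boundary case where the two component minimal polynomials coincide. For a generic irreducible $p_i(x)$ with $c_1=c_2=b_i$ the naive $\lcm$ bound gives only $\deg \leq b_i \cdot n_i$, which does \emph{not} by itself produce a contradiction, so the general statement genuinely permits both sequences to have minimal polynomial $p_i^{b_i}(x)$ (the claim only asserts ``at least one''). The strict degree drop is special to the $1+x$ case, where the period is a power of $2$ and the cited lemma applies; establishing that this collapse happens there, while it need not happen for general $p_i(x)$, is the delicate point that separates the two assertions and is where I would spend the most care.
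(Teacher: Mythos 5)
Your proof is correct, and it is evidently the intended one: the paper states this proposition without proof, remarking only that it ``is straightforward.'' Your two ingredients --- that the minimal polynomial of a sum of sequences divides the $\lcm$ of their minimal polynomials (which, for powers of an irreducible $p_i(x)$, yields the ``at least one'' claim), and the strict degree drop from \cite[Lem. 4.1]{KSSK00} when $p_1(x)=1+x$ --- are exactly the tools the paper itself deploys in the proof of Proposition~\ref{prop:nopair}, and your closing observation correctly isolates why the ``exactly one'' strengthening holds only for the degree-one factor, where the relevant quotient of solution spaces is one-dimensional over $\F_2$.
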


Algorithm~\ref{algo:all} can also be used to find all conjugate pairs between any two cycles in $\Omega(q(x))$. One needs to exercise greater care
here since the cycles in $\Omega(p_i^{b_i}(x))$ may have distinct periods. This affects
the application of the generalized CRT (Theorem \ref{CRT}).

Algorithms \ref{algo:cp-irr} and \ref{algo:all} perform better for large $s$, \ie,
when $f(x)$ has more distinct irreducible polynomials as its factors.
Algorithm \ref{algo:cp-irr} is more efficient when there are less cycles and the periods are small.
Since $q(x)$ has repeated roots, one has to treat the cycles in $\Omega(p_i^{b_i}(x))$ separately according to their respective periods.
A modified version of Algorithm \ref{algo:cp-irr} takes more time here since the overall degree is $b_i \cdot n_i$ and there are a lot of cycles to pair up. This typically drives the complexity of finding pairs of states $(\x_i,\y_i)$ satisfying $\x_i+\y_i=\bv_i$ much higher than in the case $\Omega\left(\prod_{i=1}^t p_i(x)\right)$ where $p_i(x)$ for $1 \leq i \leq t$ are distinct irreducible polynomails with the same overall degree $b_i \cdot n_i$ where the generalized CRT simplifies the process significantly. 
	
In short, we can slightly modify our approach for $f(x)$ to work on $q(x)$. A characteristic polynomial $q(x)$ with repeated roots can be used to generate de Bruijn sequences using the already mentioned modification on the respective algorithms. Unless one is prepared to commit much more computational resources or is required to produce more de Bruijn sequences that can be constructed based on $f(x)$, using $q(x)$ is generally not advisable.

In the literature, studies on the case of characteristic polynomials with repeated roots have been quite limited,
{\it e.g.}, $q(x)=(x+1)^n$ in \cite{Hemmati84} and $q(x)=(x+1)^a p(x)$ with $p(x)$ having no repeated roots or is primitive done, respectively, in~\cite{Li14-1,Li2017}. 
Prior studies looked into cases with $(x+1)^b \mid q(x)$ for $b \in \N$ because their cycle structures and adjacency graphs had been well-established.

\section{Generating the de Bruijn Sequences}\label{sec:generate}

After implementing Algorithms~\ref{algo:state} to \ref{algo:all} we obtain all of the conjugate pairs between
any two adjacent cycles in $\Omega(f(x))$ and, hence, the adjacency graph $G$.
The edges between a specific pair of vertices in $G$ correspond to the conjugate pairs shared by the represented cycles.
Derive the graph $\widehat{G}$ from $G$ by bundling together multiple edges incident to the same pair of vertices into one edge. An edge
in $\widehat{G}$ corresponds to a set of conjugate pair(s). Next, we use Algorithm~\ref{algo:sptree} to generate all spanning trees in
$\widehat{G}$. Line~\ref{line:span} requires that each identified tree contains all vertices in $\widehat{G}$. The {\tt for loop} in Lines~\ref{line:start} to \ref{line:end} ensures that all vertices are visited and eventually checked and that no cycle occurs. Note that the set $B$ and, hence, $X$ may be empty. Depending on the specifics of the input graph and the choice of $\overline{V}$, given the current sets $VList$ and $VCheck$, there are $2^{|B|} <  2^{\deg(\overline{V})}$ choices for $X$.

\begin{algorithm}[ht!]
\caption{Finding all Spanning Trees in $\widehat{G}$}
\label{algo:sptree}
\begin{algorithmic}[1]
\renewcommand{\algorithmicrequire}{\textbf{Input:}}
\renewcommand{\algorithmicensure}{\textbf{Output:}}
\Require $\widehat{G}$ with $V(\widehat{G}):=\{V_1,\ldots,V_{\psi}\}$
and edge set $E(\widehat{G})$.
\Ensure All spanning trees in $\widehat{G}$.
\State $VList \gets \{V_1\}$ \Comment{initiate the list of visited vertices; for convenience, $V_1:=[\0]$}
\State $VCheck \gets \varnothing$ \Comment{initiate the list of checked vertices}
\State $EList \gets \varnothing$ \Comment{initiate the list of collected edges}
\Procedure{SpanTrees~}{$ST(VList,VCheck,EList)$}
	\If{$|EList|=\psi-1$} \label{line:span}
		\State{return $EList$}
	\Else
		\State{take any vertex $\overline{V} \in (VList \setminus VCheck)$}
		\State{append $\overline{V}$ to $VCheck$}	
		\State{$B \gets \{V_{j} \mbox{ satisfying } (\overline{V},V_{j}) \in E(\widehat{G}) \mbox{ with } V_{j} \notin VList\}$}
		\For{each $X \subseteq B$} \label{line:start}
			\State{$VList \gets VList \cup X$}
			\State{$EList \gets EList \cup \{(\overline{V},V_{k}) \mbox{ with } V_{k} \in X\}$}
			\State{$ST(VList,VCheck,EList)$}
		\EndFor \label{line:end}
	\EndIf
\EndProcedure
\end{algorithmic}
\end{algorithm}

Our implementation is in \texttt{python}. Readers who prefer to code in \texttt{C} may opt to use
D.~Knuth's implementation of~\cite[Alg. S, pp.~464 ff.]{Knuth4A} named \texttt{grayspspan}~\cite{Gray}. Its running time is roughly estimated to be $\O(\mu+\psi+\zeta_{\widehat{G}})$ where
$\mu$ is the total number of edges in $\widehat{G}$, $\psi$ is the number of cycles in $\Omega(f(x))$,
and $\zeta_{\widehat{G}}$ is the number of the spanning trees in $\widehat{G}$. Notice that $\mu < 2^{n-1}$,
$\psi \leq \displaystyle{\frac{1}{n}\sum_{d \mid n}\phi(d)2^{\frac{n}{d}}}$ (see, \eg, \cite{M72}), and
$\zeta_{\widehat{G}}$ is much larger than both $\mu$ and $\psi$. Hence, $\O(\mu+\psi+\zeta_{\widehat{G}})=\O(\zeta_{\widehat{G}})$. 
We ran numerous simulations and came to the conclusion that the running time of Algorithm~\ref{algo:sptree} is the same as that of Algorithm S. We prefer the procedure in Algorithm~\ref{algo:sptree} since it leads to a simpler mechanism, both deterministic and random, on how to pick a particular spanning tree to use in the actual generation of the sequences. 
Next comes the cycle joining procedure.

Let $\Upsilon_{G}$ be a chosen spanning tree in $G$ and $\{(\bv_1,\widehat{\bv_1}),\ldots,(\bv_{\psi-1},\widehat{\bv}_{\psi-1})\}$ be its edge set. Let $\overline{\bv}$ denote the last $n-1$ bits of any $\bv \in \F_2^n$. Define a new set $E(\Upsilon_G):= \{\overline{\bv}_1,\ldots,\overline{\bv}_{\psi-1}\}$. From an arbitrary initial state (any vector in $\F_{2}^{n}$), use the LFSR with feedback function $h(x_0,\ldots,x_{n-1})$, \ie, with characteristic polynomial $f(x)$, 
to generate $a_0,a_1,\ldots,a_{n-1},a_n,\ldots$. For each input $(a_i,\ldots,a_{i+n-1})_{i \geq 0}$, let
\[
\begin{cases}
a_{i+n}=h(a_{i},\ldots,a_{i+n-1}) & \mbox{if } (a_{i+1},\ldots,a_{i+n-1}) \notin E(\Upsilon_G),\\
a_{i+n}=h(a_{i},\ldots,a_{i+n-1})+1 & \mbox{if } (a_{i+1},\ldots,a_{i+n-1}) \in E(\Upsilon_G).
\end{cases}
\]
The resulting sequence is de Bruijn with feedback function 
\[	h(x_0,\ldots,x_{n-1})+\sum_{\mathbf{w} \in E(\Upsilon_G)}\prod_{i=1}^{n-1}(x_i + w_i+1).
\] 
Performing the cycle joining procedure to all spanning trees in $G$ gives us all de Bruijn sequences in this class.

To tie up all examples pertaining to $f(x)=(x+1)(x^2+x+1)(x^4+x^3+x^2+x+1)$, label the vertices in $G$ as $V_1,V_2,\ldots,V_{16}$ according to the ordering in Table~\ref{table:ex}. Since $\bS\P^{-1}=(1,1,1,1,0,1,0)=\left(\1,L^{2}\s^2_0,L^{2}\s^3_2\right)$ is in $\left[\1+\s^2_0+\s^3_2\right]$, $\bS$ is the initial state of $L^2 \left(\1+\s^2_0+\s^3_2\right)$. Using the appropriate results from Example~\ref{example2} and running Algorithm~\ref{algo:all} yield all the conjugate pairs between any two cycles. We summarize the count in Table~\ref{table:edge}. The complete adjacency graph $G$ has edges labeled by the conjugate pairs.

\begin{table}[h!]
\caption{Number of Conjugate Pairs}
\label{table:edge}
\renewcommand{\arraystretch}{1.2}
\centering
\begin{tabular}{cc|cc|cc|cc|cc}
\hline
Edge & $\#$ & Edge & $\#$ & Edge & $\#$ & Edge & $\#$ & Edge & $\#$ \\
\hline
		
$\{V_1,V_{16}\}$ & $1$ & $\{V_2,V_{13}\}$ & $1$ &
$\{V_2,V_{16}\}$ & $2$ & $\{V_3,V_{14}\}$ & $2$ & $\{V_3,V_{15}\}$ & $1$ \\
		
$\{V_3,V_{16}\}$ & $2$ & $\{V_4,V_{14}\}$ & $1$ & 
$\{V_4,V_{15}\}$ & $2$ & $\{V_4,V_{16}\}$ & $2$ & $\{V_5,V_{10}\}$ & $1$ \\
		
$\{V_5,V_{14}\}$ & $2$ & $\{V_5,V_{15}\}$ & $2$ &
$\{V_6,V_{11}\}$ & $2$ & $\{V_6,V_{12}\}$ & $1$ & $\{V_6,V_{13}\}$ & $2$ \\
		
$\{V_6,V_{14}\}$ & $4$ & $\{V_6,V_{15}\}$ & $2$ & 
$\{V_6,V_{16}\}$ & $4$ & $\{V_7,V_{11}\}$ & $1$ & $\{V_7,V_{12}\}$ & $2$ \\
		
$\{V_7,V_{13}\}$ & $2$ & $\{V_7,V_{14}\}$ & $2$ & 
$\{V_7,V_{15}\}$ & $4$ & $\{V_7,V_{16}\}$ & $4$ & $\{V_8,V_{9}\}$ & $1$ \\
		
$\{V_8,V_{10}\}$ & $2$ & $\{V_8,V_{11}\}$ & $2$ &
$\{V_8,V_{12}\}$ & $2$ & $\{V_8,V_{14}\}$ & $4$ & $\{V_8,V_{15}\}$ & $4$ \\
		
\hline
\end{tabular}
\end{table}

Let $\M_{1}$ be the diagonal $8 \times 8$ matrix with entries, in order, $1,3,5,5,5,15,15,15$. Then 
\[
\M =
\begin{pmatrix}
\M_{1} & \M_{2} \\
\M_{2} & \M_{1}
\end{pmatrix} \mbox{ with }
\M_{2} =
\begin{pmatrix}
0 & 0 & 0 & 0 & 0 & 0 & 0 & -1 \\
0 & 0 & 0 & 0 & -1 & 0 & 0 & -2 \\
0 & 0 & 0 & 0 & 0 & -2 & -1 & -2 \\
0 & 0 & 0 & 0 & 0 & -1 & -2 & -2 \\
0 & -1 & 0 & 0 & 0 & -2 & -2 & 0 \\
0 & 0 & -2 & -1 & -2 & -4 & -2 & -4 \\
0 & 0 & -1 & -2 & -2 & -2 & -4 & -4 \\
-1 & -2 & -2 & -2 & 0 & -4 & -4 & 0
\end{pmatrix}.
\]
The cofactor of any entry in $\M$ gives $12,485,394,432 \approx 2^{33.54}$ as the number of constructed de Bruijn sequences of order $7$. The matrix representation $\widehat{\M}$ of $\widehat{G}$ has main diagonal entries $1,2,3,3,3,6,6,6,1,2,3,3,3,6,6,6$. For $i \neq j$, the $(i,j)$ entries is $-1$  whenever $V_i$ and $V_j$ are adjacent and $0$ otherwise. Hence, $\zeta_{\widehat{G}}=1,451,520 \approx 2^{20.47}$, which is easier to process computationally than the number $\zeta_{G}\approx 2^{33.54}$ of spanning trees in $G$.

The polynomial $f(x)$ studied in~\cite{Li16} is the product of $x+1$ and some primitive polynomials with coprime periods. Such a choice greatly simplifies the procedure since the resulting cycle structure is simple and the conjugate pairs
can be deduced directly. Their main motivation was to be able to generate the sequences quickly. The main drawback is the relatively low number of sequences generated. The construction in \cite[Sect. IV]{Li16}, for example, only produces sequences from a special spanning tree named {\it maximal spanning tree}. The number of de Bruijn sequences generated by \cite[Alg.~1]{Li16} is $\displaystyle{\O(2^{(2^{s-1}-1)n})}$. From $f(x)=(x+1)p_1(x)p_2(x)$, the number of sequences generated by \cite[Alg.~2]{Li16} is $\O(2^{3n})$. To illustrate the point more concretely, the number of all de Bruijn sequences of order $8$ that can be generated from $f(x)=(x+1)(x^3+x^2+1)(x^4+x^3+1)$ is $926,016$. Using the same $f(x)$,~\cite[Alg.~2]{Li16} produces only $592,704$ of them.

\section{Implementation}\label{sec:implem}
%%%%
\begin{table}[h!]
\caption{Some Implementation Results}
\label{table:implem}
\renewcommand{\arraystretch}{1.3}
\setlength{\tabcolsep}{0.18cm}
\centering
\begin{tabular}{c c l l ccc r}
\hline
No. & $n$ &  $\{p_i(x)\}$  & $\{e_i\}$ & $\psi$ & $\zeta_{G}$ & $\zeta_{\widehat{G}}$ & $Run$ \\
\hline

$1$ & $6$ & $\{1011,1101\}$  & $\{7,7\}$ & $10$ & $393~216$ & $51~984$ & $3.57$s \\

$2$ & $7$ & $\{11,111,11111\}$ &  $\{1,3,5\}$ & $16$ & $2^{33.5}$ & $2^{20.5}$ & $1.89$s \\ %%% our example

$3$ & $8$  & $\{11,1101,11001\}$ &  $\{1,7,15\}$ & $8$ & $926~016$ & $15$ & $2.77$s \\ %%Li et al example

$4$ & $8$ &  $\{10011, 11111\}$  & $\{15,5\}$ & $20$ & $2^{60.8}$ & $ 2^{53.0}$ & $16.37$s \\

$5$ & $9$ & $\{111, 1011, 11111\}$ & $\{3,7,5\}$ & $16$  & $ 2^{54.4}$  & $ 2^{28.8}$ & $5.47$s \\

$6$ & 	$9$ & $\{11, 100111001\}$ & $\{1,17\}$ & $32$  & $ 2^{113.4}$  & $ 2^{86.7}$ & $7.20$s \\

$7$ & 	$10$ & $\{11,111,1011,11111\}$ & $\{1,3,7,5\}$ & $32$  & $ 2^{116.0}$  & $ 2^{61.1}$ & $16.47$s \\

$8$ & $10$ & $\{11111 11111 1\}$ & $\{11\}$ & $94$  & $ 2^{304.9}$  & $ 2^{299.1}$ & $59.30$s \\

$9$ & $11$ & $\{111, 1011, 1001001\}$ & $\{3,7,9\}$ & $60$  & $ 2^{251.9}$ &  $ 2^{190.0}$ & $2$m $32$s \\

$10$ & $11$ & $\{101011100011\}$ & $\{23\}$ & $90$  & $ 2^{388.8}$  & $ 2^{373.8}$ & $1$m $27$s \\ 

$11$ & 	$12$ & $\{1001001,1010111\}$ & $\{9,21\}$ & $74$  & $ 2^{398.7}$  & $ 2^{350.7}$ & $6$m $09$s \\

$12$ & 	$13$ & $\{11, 111, 11111, 1001001\}$ & $\{1,3,5,9\}$ & $240$ & $2^{1114.6}$ & $2^{853.8}$ & $24$m $12$s \\

$13$ & 	$14$ & $\{111, 11111, 100111001\}$ & $\{3,5,9\}$ & $128$ & $2^{800.2}$ & $2^{583.7}$ & $5$m $04$s \\ 

$14$ & 	$15$ & $\{1001001, 1000000011\}$ & $\{9,73\}$ & $64$ & $2^{508.6}$ & $2^{277.3}$ & $5$m $06$s \\

$15$ & 	$16$ & $\{1001001, 10000001111\}$ & $\{9,341\}$ & $32$ & $2^{274.2}$ & $2^{97.0}$ & $7$m $58$s \\ 

$16$ & 	$16$ & $\{11, 111, 1011, 11111, 1001001\}$& $\{1,3,7,5,9\}$ & $480$ & $2^{2925.8}$ & $2^{1966.8}$ & $15$hr $23$m \\

$17$ & 	$17$ & $\{100 111 111, 1 000 000 011\}$ & $\{85,73\}$ & $32$ & $2^{310.1}$ & $2^{111.3}$ & $15$m $01$s \\

$18$ & 	$18$ & $\{111010111, 10001000111\}$ & $\{17,341\}$ & $64$ & $2^{630.6}$ & $2^{261.5}$ & $37$m $48$s \\

$19$ & 	$19$ & $\{1001100101, 10000110101\}$ & $\{73,93\}$ & $96$ & $2^{1076.3}$ & $2^{530.7}$ & $1$hr $41$m \\
 
$20$ & 	$20$ & $\{11111, 1001001, 10000001111\}$ & $\{5,9,341\}$ & $128$ & $2^{1365.0}$ & $2^{564.4}$ & $4$hr $36$m \\ 
\hline
\end{tabular}
\end{table}

A basic software implementation was written in {\tt python} 2.7 and is available online at \url{https://github.com/adamasstokhorst/debruijn} \cite{EF17a}. The platform was a laptop having 11.6 GB available memory with Windows 10 operating system powered by an Intel i7-7500U CPU 2.70GHz. Table~\ref{table:implem} presents some of the implementation results. 

The program determine \underline{exactly} the number $\psi$ of distinct cycles in $\Omega(f(x))$, the adjacency graph $G$ and its corresponding matrix $\M$, the number $\zeta_{G}$ of de Bruijn sequences that can be generated, and the number $\zeta_{\widehat{G}}$ of all spanning trees in $\widehat{G}$. It then outputs $100$ de Bruijn sequences with initial state $\0$. The total running time is denoted by $Run$. For brevity, $2^{k}$ with $k \geq 20$ rounded to one decimal place approximates $\zeta_G$ and $\zeta_{\widehat{G}}$ and we remove the $\approx$ sign from the table. Entry 2 is our example while Entry 3 is an example in~\cite{Li16}. 

To build a library of de Bruijn sequences of order $n$, one can store the graphs $G$ and generate $\widehat{G}$. Once a tree in $\widehat{G}$ has been identified, the corresponding set of tree(s) in $G$ is listed down. For each of these trees, one uses the cycle joining method to generate all de Bruijn sequences in this class. To generate a random de Bruijn sequence, we can chose a random tree $\Upsilon_{\widehat{G}}$ before selecting one among the corresponding trees in $G$ randomly, say $\Upsilon_{G}$. We then use an arbitrary $\bv \in \F_{2}^{n}$ as the initial state in the cycle joining routine. The randomization does not significantly alter the running time.

A user may want to output one de Bruijn sequence chosen uniformly at random from among all of the constructible sequences. One simply applies Broder's elegant algorithm~\cite[Alg. Generate]{Bro89} on $G$ (not on $\widehat{G}$) to get a uniformly random tree $\Upsilon_{G}$. The algorithm's expected running time per generated tree is $\O(\psi \log \psi)$ for most graphs. The worst case value is $\O(\psi^{3})$. One then picks an arbitrary initial state $\bv$ from $\F_2^{n}$ and performs the cycle joining routine with $\Upsilon_G$ and $\bv$ as ingredients. Once $G$ has been determined, the rest of the process is very fast.

For large $n$ or $s$, the required memory and running time can quickly exceed available resources if we insist on building $G$ completely. To generate a few (at least one) de Bruijn sequences of large order, we perform the routine up to generating all the cycles and finding a state in each cycle. We can then proceed to identify a simple connected subgraph containing all cycles in $V(G)$. One way to do this is to index the cycles as $C_1,C_2,\ldots,C_{\psi}$ before collecting all neighbours of $C_1$ in a set $\mathcal{N}_{C_1}$. Only one conjugate pair between $C_1$ and each member of $\mathcal{N}_{C_1}$ needs to be found. We pick the cycle with the smallest index in $\mathcal{N}_{C_1}$ and determine all of its neighbours in the set $V(G) \setminus (\mathcal{N}_{C_1} \cup C_1)$, listing a single conjugate pair for each pair of cycles. The procedure is repeated until a connected graph containing $V(G)$ is found. The spanning trees in the resulting graph generates de Bruijn sequences.

\section{Conclusion and Future Directions}\label{sec:conclu}

We put forward a method to generate a large class of binary
de Bruijn sequences from LFSRs with an arbitrary characteristic polynomial, paying special attention to the case where $f(x)$ is the product of $s$ pairwise distinct irreducible polynomials. The related structures are studied in details.
Our approach covers numerous prior constructions as special cases. Table~\ref{table:compare} lists their relevant parameters and results arranged chronologically by publication dates. In addition to the cycle structure and the adjacency graph, we note if the feedback functions of the maximum-length NLFSRs are explicitly derived. Discussions on the number of de Bruijn sequences that can be constructed and on the algorithmic steps to generate them are also considered in the list.

\begin{table}[h!]
\caption{List of Prior Cycle-Joining Constructions}
\label{table:compare}
\renewcommand{\arraystretch}{1.1}
\setlength{\tabcolsep}{0.09cm}
\centering
\begin{tabular}{clc cccc c}
\hline

No. & The Characteristic & Ref. & Structure & Graph & Feedback & Number of &   Generating \\

   & Polynomial $f(x)$ & & $\Omega(f(x))$ & $G$ & Function & Sequences &   Algorithm(s) \\
\hline

$1$ & $(1+x)^n$ & \cite{Hemmati84} & Yes & No & No & \multicolumn{2}{c}{Not discussed in the paper}\\

$2$ & $f(x)$ is irreducible & \cite{HH96} & Yes & Yes & No & Bounds & Not discussed\\

$3$ & $(1+x)^m~p(x)$, & \cite{Li14-1} & Yes & Yes & Yes & Estimated & Implicit, \\
& $p(x)$ is primitive & & (All $m$) & \multicolumn{3}{c}{(Only applicable for $m=3$)} & without analysis\\
		
$4$ & $(1+x^3)~p(x)$, & \cite{Li14-2} & Yes & Yes & Yes & Estimated & For some, not\\
 & $p(x)$ is primitive && &&&  & all, sequences\\

$5$ & $\prod_{i=1}^{s} p_i(x)$, $p_i(x)$ primitive, & \cite{Li16} & Yes & Yes & Yes & Estimated & Yes, with analysis  \\
& degrees  increasing & & (All) & \multicolumn{4}{c}{(These results apply only for $p_1(x)=1+x$)} \\
& and pairwise coprime & & &&&   & \\

$6$ & Product of 2 irreducibles & \cite{Chang2017} & Yes & Yes & Yes & Estimated & Yes, without  \\
& && &&&& analysis\\

$7$ & $\ell(x) ~p(x)$, $p(x)$ primitive, & \cite{LL17a} & Yes & Yes & Yes & Exact & Implicit, with analysis \\
& $\ell(x)$ has small degree  & & \multicolumn{5}{c}{(If periods of $\ell(x)=1+x+x^3+x^4$ and $p(x)$ are coprime)} \\

$8$ & $\prod_{i=1}^{s} \ell_i(x)$, factors are& \cite{LL17} &  Yes & Yes & \multicolumn{2}{c}{Not discussed} & Implicit, \\
& pairwise coprime & & &&&&  without analysis\\
\hline
\end{tabular}
\end{table}

The followings are useful details on each entry in Table~\ref{table:compare}. 
\begin{enumerate}
\item The number of conjugate pairs between any two cycles in~\cite{Hemmati84} is shown to be $\leq 2$ but without any method to determine them. A complete adjacency graph is not provided.

\item A lower bound on the number of resulting sequences is stated in~\cite[Thm.~6]{HH96}. Exact values in some specific cases are given in~\cite[Thm.~7 and Thm.~8]{HH96}.
\item The estimated number $\mathcal{O}(2^{4n})$ of the resulting sequences in~\cite{Li14-1} applies for $m=3,n \geq 5$.

\item Not all possible de Bruijn sequences are constructed in~\cite{Li14-2} since only special types of spanning trees are used in the cycle joining process. Using this well chosen adjacency subgraph, the time as well as memory complexity of the algorithm that generate $\mathcal{O}(2^{4n})$ and $\mathcal{O}(2^{8n})$ de Bruijn sequences for odd and even $n$, respectively, is $\mathcal{O}(n)$.

\item Recall that $n_s$ is the degree of $p_s(x)$. When $p_1(x)=1+x$, the number of de Bruijn sequences that can be constructed in~\cite{Li16} is estimated to be $\mathcal{O}\left(2^{(2^k-1)n}\right)$. The algorithm runs in memory complexity $\mathcal{O}\left(2^{s+1} (s+1) n\right)$ and time complexity $\mathcal{O}\left(2^{n-n_s} (s+1) n\right)$. If $s=3$ and $n \geq 8$, the time complexity can be reduced to $\mathcal{O}\left(n^{\log \log(n)}\right)$.

\item The exact number of de Bruijn sequences that can be constructed and the discussion on the time complexity $\mathcal{O}(n)$ required to completely generate one such sequence are given in~\cite[Sect. VII]{LL17a}. 

\item Only a rough estimate on the number of constructed sequences is given in~\cite[Eq. (21)]{Chang2017}. An exact count is provided in~\cite[Thm.~4]{Chang2017} for $f(x)=(1+x+x^2) p(x)$ where $p(x)$ is any primitive polynomial  of degree $\geq 3$. 

\item The main focus in~\cite{LL17} is the adjacency graph. The number of sequences that can be built and their feedback functions are out of the scope and the discussion on how to generate them is limited to some illustrative cases.
\end{enumerate}

In this work we build a de Bruijn sequence generator and back our claims with implementation results. Our basic implementation scenarios can be extended to cover more application-dependent purposes. Using the results established in this paper, one can write an implementation software to identify enough conjugate pairs that yield a spanning tree, from which a de Bruijn sequence of a large order can be built quickly.

Many interesting directions remain open for investigation. A partial list includes the following problems.
\begin{enumerate}
\item Optimize the algorithms above to practically generate de Bruijn sequences for larger $n$.

\item Derive a reasonably tight lower bound on the number of de Bruijn sequences of a very large order that can be generated by cycle joining method based on a given LFSR.

\item Generalize our results to any finite field $\F_{q}$. This is nontrivial since the presence of multiple nonzero elements changes the definition of a conjugate pair.

\item Determine good lower and upper bounds on the linear complexity of the resulting sequences or their modified version, \ie, those obtained by deleting a zero from the consecutive string of $n$ zeroes. We believe that this problem is very challenging.
 
\item Use the cycle joining method to generate de Bruijn sequences from NLFSRs.

\item Find really simple rules to quickly generate de Bruijn sequences in the spirit of the {\it prefer-one method} given in~\cite{Fred82}. 
\end{enumerate}

\begin{acknowledgements}
Adamas Aqsa Fahreza wrote the \texttt{python} implementation code. The work of Z.~Chang is supported by the National Natural Science Foundation of China under Grant 61772476 and the Key Scientific Research Projects of Colleges and Universities in Henan Province under Grant 18A110029. Research Grants TL-9014101684-01 and MOE2013-T2-1-041 support the research carried out by M.~F.~Ezerman, S.~Ling, and H.~Wang.
\end{acknowledgements}

%\bibliographystyle{spmpsci}   
%\bibliography{dBSeq.bib}  

\begin{thebibliography}{10}
	\providecommand{\url}[1]{{#1}}
	\providecommand{\urlprefix}{URL}
	\expandafter\ifx\csname urlstyle\endcsname\relax
	\providecommand{\doi}[1]{DOI~\discretionary{}{}{}#1}\else
	\providecommand{\doi}{DOI~\discretionary{}{}{}\begingroup
		\urlstyle{rm}\Url}\fi
	
	\bibitem{AEB87}
	van Aardenne-Ehrenfest, T., de~Bruijn, N.G.: Circuits and trees in oriented
	linear graphs.
	\newblock Simon Stevin \textbf{28}, 203--217 (1951)
	
	\bibitem{Arn10}
	Arndt, J.: Matters Computational: Ideas, Algorithms, Source Code, 1st edn.
	\newblock Springer-Verlag, New York, NY, USA (2010)
	
	\bibitem{Bro89}
	Broder, A.: Generating random spanning trees.
	\newblock In: Proc. 30th Annual Symposium on Foundations of Computer Science, pp. 442--447 (1989)
	
	\bibitem{Bruck12}
	Bruckstein, A.M., Etzion, T., Giryes, R., Gordon, N., Holt, R.J., Shuldiner,
	D.: Simple and robust binary self-location patterns.
	\newblock IEEE Trans. Inform. Theory \textbf{58}(7), 4884--4889 (2012)
	
	\bibitem{Bruijn46}
	de~Bruijn, N.G.: A combinatorial problem.
	\newblock Koninklijke Nederlandse Akademie v. Wetenschappen \textbf{49},
	758--764 (1946)
	
	\bibitem{CELW16}
	Chang, Z., Ezerman, M.F., Ling, S., Wang, H.: The cycle structure of {LFSR}
	with arbitrary characteristic polynomial over finite fields. 
	\newblock Cryptogr. Commun. (2017). 
	\newblock \doi{10.1007/s12095-017-0273-2}, Online First 20 Dec. 2017
	%\newblock \urlprefix\url{https://arxiv.org/pdf/1612.07928v1.pdf}
	
	\bibitem{Chang2017}
	Chang, Z., Ezerman, M.F., Ling, S., Wang, H.: Construction of de Bruijn
	sequences from product of two irreducible polynomials.
	\newblock Cryptogr. Commun. \textbf{10}(2), 251--275 (2018)
	
	\bibitem{Ding96}
	Ding, C., Pei, D., Salomaa, A.: Chinese Remainder Theorem: Applications in
	Computing, Coding, Cryptography.
	\newblock World Scientific Publishing Co., Inc., River Edge, NJ, USA (1996)
	
	\bibitem{EF17a}
	Ezerman, M. F., Fahreza, A. A.: A binary de Bruijn sequence generator from product of irreducible polynomials. 
	\newblock \urlprefix\url{github.com/adamasstokhorst/debruijn}

	\bibitem{EL84}
	Etzion, T., Lempel, A.: Algorithms for the generation of full-length
	shift-register sequences.
	\newblock IEEE Trans. Inform. Theory \textbf{30}(3), 480--484 (1984)
	
	\bibitem{Fred75}
	Fredricksen, H.: A class of nonlinear de {B}ruijn cycles.
	\newblock J. Combinat. Theory, Ser. A \textbf{19}(2), 192 -- 199 (1975)
	
	\bibitem{Fred82}
	Fredricksen, H.: A survey of full length nonlinear shift register cycle
	algorithms.
	\newblock SIAM Review \textbf{24}(2), 195--221 (1982)
	
	\bibitem{Golomb81}
	Golomb, S.W.: Shift Register Sequences.
	\newblock Aegean Park Press, Laguna Hills (1981)
	
	\bibitem{GG05}
	Golomb, S.W., Gong, G.: Signal Design for Good Correlation: for Wireless
	Communication, Cryptography, and Radar.
	\newblock Cambridge Univ. Press, New York (2004)
	
	\bibitem{HH96}
	Hauge, E.R., Helleseth, T.: De {B}ruijn sequences, irreducible codes and
	cyclotomy.
	\newblock Discrete Math. \textbf{159}(1-3), 143--154 (1996)
	
	\bibitem{HM96}
	Hauge, E.R., Mykkeltveit, J.: On the classification of de {B}ruijn sequences.
	\newblock Discrete Math. \textbf{148}(1–3), 65 -- 83 (1996)
	
	\bibitem{Hemmati84}
	Hemmati, F., Schilling, D.L., Eichmann, G.: Adjacencies between the cycles of a
	shift register with characteristic polynomial $(1+x)^{n}$.
	\newblock IEEE Trans. Comput. \textbf{33}(7), 675--677 (1984)
	
	\bibitem{Gray}
	Knuth, D.E.: Grayspspan.
	\newblock
	\urlprefix\url{http://www-cs-faculty.stanford.edu/~uno/programs/grayspspan.w}
	
	\bibitem{Knuth2}
	Knuth, D.E.: The Art of Computer Programming. Vol.~2 (3rd Ed.)., Seminumerical Algorithms. 
	\newblock Addison-Wesley, Longman Publishing Co., Inc., Boston (1997)
		
	\bibitem{Knuth4A}
	Knuth, D.E.: The Art of Computer Programming. Vol.~4A., Combinatorial
	Algorithms. Part 1.
	\newblock Addison-Wesley, Upple Saddle River (N.J.), London, Paris (2011)
	
	\bibitem{KSSK00}
	Kurosawa, K., Sato, F., Sakata, T., Kishimoto, W.: A relationship between
	linear complexity and k-error linear complexity.
	\newblock IEEE Trans. Inform. Theory \textbf{46}(2), 694--698 (2000)
	
	\bibitem{Li14-1}
	Li, C., Zeng, X., Helleseth, T., Li, C., Hu, L.: The properties of a class of
	linear {FSR}s and their applications to the construction of nonlinear {FSR}s.
	\newblock IEEE Trans. Inform. Theory \textbf{60}(5), 3052--3061 (2014)
	
	\bibitem{Li14-2}
	Li, C., Zeng, X., Li, C., Helleseth, T.: A class of de {B}ruijn sequences.
	\newblock IEEE Trans. Inform. Theory \textbf{60}(12), 7955--7969 (2014)
	
	\bibitem{Li16}
	Li, C., Zeng, X., Li, C., Helleseth, T., Li, M.: Construction of de {B}ruijn
	sequences from {LFSR}s with reducible characteristic polynomials.
	\newblock IEEE Trans. Inform. Theory \textbf{62}(1), 610--624 (2016)
	
	\bibitem{Li2017}
	Li, M., Jiang, Y., Lin, D.: The adjacency graphs of some feedback shift
	registers.
	\newblock Des. Codes Cryptogr. \textbf{82}(3), 695--713 (2017)
%	\newblock \doi{10.1007/s10623-016-0187-6}.
	
	\bibitem{LL17a}
	Li, M., Lin, D.: The adjacency graphs of LFSRs with primitive-like characteristic polynomials. 
	\newblock IEEE Trans. Inform. Theory \textbf{63}(2), 1325--1335 (2017).
	
	\bibitem{LL17}
	Li, M., Lin, D.: De Bruijn sequences, adjacency graphs and cyclotomy. 
	\newblock IEEE Trans. Inform. Theory \textbf{64}(4), 2941--2952 (2018)
	
	\bibitem{LN97}
	Lidl, R., Niederreiter, H.: Finite Fields.
	\newblock Encyclopaedia of Mathematics and Its Applications. Cambridge Univ.
	Press, New York (1997)
	
	\bibitem{Menezes:1996}
	Menezes, A.J., Vanstone, S.A., Oorschot, P.C.V.: Handbook of Applied
	Cryptography, 1st edn.
	\newblock CRC Press, Inc., Boca Raton, FL, USA (1996)
	
	\bibitem{M72}
	Mykkeltveit, J.: A proof of {G}olomb's conjecture for the de {B}ruijn graph.
	\newblock J. of Combinat. Theory, Ser. B \textbf{13}(1), 40 -- 45 (1972)
	
	\bibitem{NP13}
	Nagarajan, N., Pop, M.: Sequence assembly demystified.
	\newblock Nature Rev. Genet. \textbf{14}(3), 157--167 (2013)
	
	\bibitem{Ral82}
	Ralston, A.: De {B}ruijn sequences - a model example of the interaction of
	discrete mathematics and computer science.
	\newblock Math. Magazine \textbf{55}(3), 131--143 (1982)
	
	\bibitem{SG13}
	Spinsante, S., Gambi, E.: De {B}ruijn binary sequences and spread spectrum
	applications: A marriage possible?
	\newblock IEEE Trans. Aerosp. Electron. Syst. \textbf{28}(11),
	28--39 (2013)
	
\end{thebibliography}

\end{document}